\documentclass[sigconf, nonacm, pdfa]{acmart}
\usepackage[a-2b]{pdfx}
\hypersetup{pdfapart=2,pdfaconformance=b}
\usepackage{pvldb}% VLDB/PVLDB formatting overrides, see pvldb.sty for details.
\renewcommand\vldbdoi{10.14778/3849398.3849421}
\renewcommand\vldbpages{5286 - 5299}
\renewcommand\vldbavailabilityurl{https://github.com/lanfangping/CATune} 
\renewcommand\vldbvolume{19}
\renewcommand\vldbissue{13}
\renewcommand\vldbyear{2026}
\renewcommand\vldbpagestyle{empty}

\usepackage{multirow}
\usepackage{pdflscape} % rotate page to landscape
\usepackage{adjustbox} % rotate area
\usepackage{longtable} % for the long table that is broken into several pages
\usepackage{hyperref} % for hyperlinks
\usepackage[normalem]{ulem}  % 
\useunder{\uline}{\ul}{}
\usepackage{array}
\newcolumntype{L}[1]{>{\raggedright\arraybackslash}p{#1}}

\usepackage{booktabs}
\usepackage{amsmath}
\usepackage{makecell} 

\newcommand{\knob}[1]{\texttt{#1}}

\newcommand{\Sys}{\textsc{CATune}}

\makeatletter
\renewcommand{\subparagraph}{%
  \@startsection{subparagraph}{5}{\z@}{1ex \@plus 1ex \@minus .2ex}{-1em}{\normalfont\normalsize\itshape}}
\makeatother

\usepackage{amsthm} % For theorem-like environments
\usepackage{mdframed} % Include the mdframed package
\newtheoremstyle{italicstyle}
  {0em}   % Space above
  {0.5em}   % Space below
  {\itshape}  % Body font
  {}      % Indent amount (none)
  {\itshape} % Theorem head font
  {.}     % Punctuation after theorem head
  {\newline}  % Space after theorem head (space to insert after the head, before the text)
  {}  % Theorem head spec, use newline and set body indent to 2em

\theoremstyle{italicstyle}
\newtheorem{exampleinner}{Example}

\usepackage{setspace}  % For adjusting line spacing

\usepackage{hhline}

\usepackage{tikz}
\newcommand{\redcircled}[1]{%
    \tikz[baseline=(char.base)]{
        \node[shape=circle,fill=black,text=white,inner sep=1pt, yshift=0.5ex] (char) {#1};
    }%
}

\usepackage[ruled,vlined]{algorithm2e}
\usepackage{algorithmic}

\usepackage{graphicx}
\usepackage{subcaption}

\usepackage{enumitem}

\definecolor{reviewerone}{HTML}{005374}
\colorlet{revieweronebg}{reviewerone!2}

\definecolor{reviewertwo}{HTML}{006622}
\colorlet{reviewertwobg}{reviewertwo!2}

\definecolor{reviewerthree}{HTML}{6A3D9A}
\colorlet{reviewerthreebg}{reviewerthree!2}

\colorlet{colorcommentfg}{black}
\definecolor{colorcommentframe}{HTML}{005374}
\colorlet{colorcommentbg}{colorcommentframe!2}

\newcounter{reviewer}
\newcounter{revcomment}
\counterwithin*{revcomment}{reviewer}

\RequirePackage[breakable,skins,xparse]{tcolorbox}

\newlength{\borderthickness}
\newlength{\accentborderthickness}
\newcommand{\revcommentdefaultcolor}{colorcommentframe}
\newcommand{\revcommentdefaulttitle}[1]{Comment~#1}
\newcommand{\revresponselabel}{Response:}

\NewTColorBox[use counter*=revcomment]{revcomment}{ O{\revcommentdefaultcolor} O{} }{%
    adjusted title={%
        \if\relax\detokenize{#2}\relax
            \revcommentdefaulttitle{\therevcomment}%
        \else
            #2%
        \fi
    },%
    fonttitle={\bfseries\sffamily},%
    coltitle={#1},%
    coltext={colorcommentfg},%
    colback={#1!2},%
    colframe={#1},%
    colbacktitle={#1!2},%
    bookmark={%
        \if\relax\detokenize{#2}\relax
            \revcommentdefaulttitle{\therevcomment}%
        \else
            #2%
        \fi
    },%
    boxrule={\borderthickness},%
    leftrule={\accentborderthickness},%
    titlerule={\borderthickness},%
    enhanced jigsaw,%
    label={comment:\arabic{reviewer}.\arabic{revcomment}},%
}

\NewDocumentEnvironment{revresponse}{ O{} }{%
    \par\noindent\textsf{\textbf{\revresponselabel}}%
    \if\relax\detokenize{#1}\relax
        \enskip
    \else
        \enskip #1\par
    \fi
}{%
    \par
}

\begin{document}
\title{\Sys: Structural Constraint-Aware Bayesian Optimization for DBMS Configuration Tuning}

%%
%% The "author" command and its associated commands are used to define the authors and their affiliations.

\author{Fangping Lan}
\affiliation{%
  \institution{Temple University}
  \streetaddress{1801 N Broad St}
  \city{Philadelphia}
  \state{PA}
  \country{USA}
}
\email{fangping.lan@temple.edu}

\author{Qi Zhang}
\affiliation{%
  \institution{Temple University}
  \streetaddress{1801 N Broad St}
  \city{Philadelphia}
  \state{PA}
  \country{USA}
}
\email{qi.zhang@temple.edu}

\author{Eduard Dragut}
\affiliation{%
  \institution{Temple University}
  \streetaddress{1801 N Broad St}
  \city{Philadelphia}
  \state{PA}
  \country{USA}
}
\email{edragut@temple.edu}

%%
%% The abstract is a short summary of the work to be presented in the
%% article.
\begin{abstract}
Modern DBMSs expose hundreds of configuration knobs, resulting in a high-dimensional and heterogeneous search space that makes automated tuning costly. 
Existing ML-based tuning systems typically treat the configuration domain as box-constrained and rely on workload feedback to implicitly capture inter-knob relationships. 
However, DBMS documentation specifies deterministic knob dependency constraints—particularly ordering constraints—that characterize structurally valid regions of the configuration space.
We present \Sys{}, a constraint-aware Bayesian optimization (BO) framework that models deterministic inter-knob ordering constraints as structural components of the search domain. 
Instead of learning feasibility boundaries through sampled violations, \Sys{} performs optimization within a constraint-consistent subspace. 
We develop a topology-aware sampling strategy that respects dependency structure during exploration and avoids the inefficiencies of post-hoc constraint handling. 
To enable automated constraint discovery, we further design a precision-first extraction pipeline that combines LLM-based parsing with reliability safeguards to mitigate hallucinated dependencies.
Experiments on PostgreSQL and MySQL using TPC-C and TPC-H workloads show that \Sys{} substantially improves both sample efficiency and final tuning quality across surrogate models and BO frameworks. 
Under default ranges, \Sys{} reaches the baseline optimum up to $12.5\times$ faster and improves throughput by up to 63.37\%. 
The improvements persist under knowledge-guided reduced ranges and alternative optimization implementations. 
These results demonstrate that explicitly modeling system-defined deterministic ordering constraints enhances optimization robustness and system stability.
\end{abstract}

\maketitle

%%% do not modify the following VLDB block %%
%%% VLDB block start %%%
\vldbtopmatter
%%% VLDB block end %%%

\section{Introduction}

Modern Database Management Systems (DBMSs) expose hundreds of configuration knobs that govern memory allocation, concurrency control, logging behavior, and query execution strategies \cite{Pavlo2017SelfDrivingDM}. While this flexibility enables performance optimization across diverse workloads, it creates a high-dimensional and heterogeneous configuration space that is difficult to navigate \cite{10.14778/3450980.3450992}. For example, 
% PostgreSQL v14.9 provides 346 
PostgreSQL v13 provides 263
knobs spanning continuous, integer, and categorical domains. In cloud environments, where workloads and hardware configurations vary widely, manual tuning by database administrators (DBAs) becomes increasingly infeasible \cite{10.1145/3299869.3300085}.

To address this challenge, prior work has proposed automated and ML-based tuning methods. Existing approaches broadly fall into two categories:
(1) offline transfer-based methods that pre-train models on benchmark workloads and transfer knowledge to new deployments \cite{10.1145/3035918.3064029, Yogatama2014EfficientTL, 10.14778/3632093.3632114}, and
(2) online search-based methods that iteratively select configurations using search algorithms, reinforcement learning (RL) \cite{10.1145/3514221.3517882, 10.14778/3352063.3352129, 10.1145/3514221.3517843, 10.1145/3299869.3300085}, or Bayesian optimization (BO) techniques such as SMAC \cite{li2019acceleratingexperimentaldesignincorporating, 10.14778/1687627.1687767, 10.1145/3318464.3380591, 10.1145/3035918.3064029, 10.1145/3448016.3457291, 10.1145/3514221.3526176, 10.5555/3586589.3586643}.

Although these methods can eventually identify high-performing configurations, they incur significant tuning cost \cite{10.14778/3551793.3551844, 10.1145/3318464.3380591}. State-of-the-art systems often require \textit{hundreds of iterations} to stabilize performance \cite{10.14778/3538598.3538604}, and each iteration may execute expensive workloads. This cost arises from two challenges: the large number of knobs requiring exploration and the broad vendor-provided ranges associated with each knob. These ranges are intentionally conservative to ensure generality but substantially enlarge the search space and may include unstable or invalid regions \cite{10.14778/3476311.3476411, 10.14778/3450980.3450992}.

Meanwhile, human DBAs rarely rely on blind trial-and-error. Instead, they use domain knowledge documented in manuals and community discussions. Recent knowledge-aware tuners leverage Large Language Models (LLMs) to extract suggested values or reduce search regions \cite{10.1145/3514221.3517843, 10.1145/3514221.3520171, 10.14778/3659437.3659449, 10.1145/3733620.3733641, 10.1145/3626246.3654739}. However, these approaches primarily focus on identifying promising value ranges.

\begin{table}[]
\centering
\caption{Three examples of knob dependency constraints found in PostgreSQL V13 Documentation}
% \vspace{-1.2em}
\begin{tabular}{l|l}
\Xhline{1pt}
 \textbf{Knob} & \texttt{superuser\_reserved\_connections} \\
 \hline
 \textbf{Description} & \parbox[l][0.8cm][c]{6cm}{Determines the number of connection ... {\color{blue}The value must be less than \texttt{max\_connections}}...} \\
 \hline
 \textbf{Constraint} & \makecell[tl]{\texttt{superuser\_reserved\_connections} \\ $<$ \texttt{max\_connections}} \\
 \Xhline{1pt}
  \textbf{Knob} & \texttt{max\_parallel\_workers} \\
 \hline
 \textbf{Description} & \parbox[l][1.2cm][c]{6cm}{... note that {\color{blue} a setting for this value which is higher than \texttt{max\_worker\_processes} will have no effect}, since ...} \\
 \hline
 \textbf{Constraint} & \makecell[cl]{\texttt{max\_parallel\_workers} $\leq$ \\\texttt{max\_worker\_processes}} \\
 \Xhline{1pt}
  \textbf{Knob} & \texttt{max\_parallel\_workers\_per\_gather} \\
 \hline
 \textbf{Description} & \parbox[l][1.6cm][c]{6cm}{Sets the maximum number of workers ... {\color{blue}Parallel workers are taken from the pool of processes established by \texttt{max\_worker\_processes}, limited by \texttt{max\_parallel\_workers}} ...} \\
 \hline
 \textbf{Constraint} & \makecell[cl]{\texttt{max\_parallel\_workers\_per\_gather}\\
    $<$ \texttt{max\_worker\_processes},\\
    \texttt{max\_parallel\_workers\_per\_gather}\\
    $<$ \texttt{max\_parallel\_workers}} \\
 \Xhline{1pt}
\end{tabular}

\label{tab:constraint-examples}
% \vspace{-1.2em}
\end{table}

% \begin{table*}[t]
% \centering
% \caption{Three examples of knob dependency constraints found in PostgreSQL V13 Documentation}
% \vspace{-1em}
% \label{tab:constraint-examples}
% \begin{tabular}{p{3.9cm}|p{8cm}|p{4.8cm}}
% \Xhline{1pt}
%     \textbf{Knob} &
%     \textbf{Description} &
%     \textbf{Constraint} \\ \hline
    
%     \texttt{superuser\_reserved\_connec tions} &
%     Determines the number of connection “slots” that are reserved for connections by PostgreSQL superusers ... {\color{blue}The value must be less than \texttt{max\_connections}}...&
%     \texttt{superuser\_reserved\_connections} $<$ \texttt{max\_connections} \\  \hline 
    
%     \texttt{max\_parallel\_workers} & 
%     ... Also, note that {\color{blue} a setting for this value which is higher than \texttt{max\_worker\_processes} will have no effect}, since parallel workers are taken from the pool of worker processes established by that setting. & 
%     \texttt{max\_parallel\_workers} $\leq$ \texttt{max\_work er\_processes} \\  \hline

%     \texttt{max\_parallel\_workers\_per \_gather} &
%     Sets the maximum number of workers that can be started by a single Gather or Gather Merge node. {\color{blue}Parallel workers are taken from the pool of processes established by \texttt{max\_worker\_processes}, limited by \texttt{max\_parallel\_workers}}...
%     &
%     \makecell[tl]{\texttt{max\_parallel\_workers\_per\_gather}\\
%     $<$ \texttt{max\_worker\_processes},\\
%     \texttt{max\_parallel\_workers\_per\_gather}\\
%     $<$ \texttt{max\_parallel\_workers}} \\
% \Xhline{1pt}
% \end{tabular}
% \vspace{-1em}
% \end{table*}

An under-exploited form of knowledge lies in system-defined deterministic knob dependency constraints, i.e., relationships among knobs that define structurally invalid or suboptimal regions (e.g., ordering constraints such as $A \leq B$). These dependencies are documented in manuals and forums (Table~\ref{tab:constraint-examples}); to our knowledge, they are not explicitly encoded in existing ML-based tuning systems. As a result, current tuners treat the configuration space as box-constrained and implicitly learn such relationships through trial-and-error feedback, leading to wasted evaluations of invalid configurations. In contrast, we explicitly encode deterministic inter-knob ordering constraints as structural components of the search domain.

\textit{We hypothesize that explicitly encoding deterministic knob dependency constraints into the optimization process can substantially improve sample efficiency and robustness.} By filtering structurally invalid regions prior to evaluation, constraint-aware tuning reduces unnecessary trials and accelerates convergence.

We present \Sys{}, a system that encodes ordering constraints directly into the configuration domain and integrates them with BO-based optimizers. Ordering constraints naturally induce a structured dependency graph, which we model as a \emph{topology graph}. Unlike classical constrained BO methods \cite{Schonlau1998, 10.5555/3020751.3020778, 10.5555/3044805.3044997, 10.1007/s11222-014-9477-x, 10.1145/3448016.3457291}, which model constraints probabilistically and learn feasibility boundaries by sampling infeasible configurations, \Sys{} restricts optimization to a constraint-consistent subspace prior to search. Structural constraints are enforced in both exploitation (Section~\ref{sec:exploitation_phase}) and exploration (Section~\ref{sec:exploration_phase}).

% However, naive constraint enforcement via rejection-based handling can suffer from extremely low acceptance rates under multiple dependency constraints \cite{hutter2015smacmanual}. To address this, we design a topology-aware sampling strategy that respects the dependency graph during sample generation, ensuring feasibility by construction while maintaining high sampling efficiency (Section~\ref{sec:topo_sampling}).

However, post-hoc constraint-handling strategies enforce dependencies only after candidate generation, either by rejecting infeasible configurations~\cite{hutter2015smacmanual} or by repairing them into feasible ones. Under multiple dependency constraints, these approaches  incur substantial overhead from excessive rejected samples or repeated repair operations. To address this issue, we design a topology-aware sampling strategy that respects the dependency graph during sample generation, ensuring that every sampled configuration is feasible by construction while avoiding rejection and repair overhead.

Extensive experiments demonstrate that \Sys{} improves both final tuning quality and convergence speed across workloads, surrogate models, BO implementations, and search-space regimes (Section~\ref{sec:evaluation}). Compared to penalty-based strategies, proactive constraint filtering yields faster convergence by eliminating infeasible regions prior to optimization. Ablation studies confirm the effectiveness of topology-aware sampling (Section \ref{sec:different_strategies}). Beyond performance gains, we analyze the system-level impact of constraint enforcement (Section~\ref{sec:operational_impact}). Correct ordering constraints prevent DBMS startup failures and improve reliability, whereas hallucinated constraints can negatively affect optimization performance.

Although LLMs demonstrate strong capability in extracting tuning knowledge from textual sources \cite{wang2025autorulereasoningchainofthoughtextracted, zhang2024ruaglearnedruleaugmentedgenerationlarge}, they are not immune to hallucination \cite{51844, xu2025hallucinationinevitableinnatelimitation}. Our quality analysis in the technical report \cite{CATuneTechReport} shows that direct LLM prompting can produce unsupported or incorrectly oriented dependencies even when evidence spans are requested. Notably, two of these hallucinated constraints led to measurable degradation in tuning efficiency (Section \ref{sec:operational_impact}). 
% Our empirical analysis (Section~\ref{sec:operational_impact}) shows that among 30 extracted constraints, 6 were hallucinated. Notably, two of these hallucinated constraints led to measurable degradation in tuning efficiency. 
These findings highlight the risk of incorporating automatically generated constraints. 
To mitigate this risk, we design a precision-first constraint extraction pipeline that prioritizes correctness when leveraging LLM-based agents for automated dependency discovery.

\textbf{Contributions.} Our contributions are:

\begin{enumerate}[nosep]
    \item We propose \Sys{}, a DBMS tuning system that structurally encodes deterministic knob dependency constraints to restrict optimization to a constraint-consistent subspace.
    \item We compare three baseline constraint-handling strategies and show that correcting invalid configurations after sampling adds substantial overhead. 
    \item We introduce a topology-aware sampling strategy that avoids low acceptance rates inherent in rejection-based constraint handling.
    \item We design a precision-first LLM-assisted extraction pipeline that automatically discovers and refines knob dependency constraints while mitigating hallucination.
    \item We conduct extensive experiments demonstrating improved convergence speed and final performance across workloads, surrogate models, and BO frameworks.
    \item We provide a systematic analysis of constraint employment strategies, hallucinated constraints, and the system-level impact of constraint enforcement.
\end{enumerate}

\vspace{-10pt}
\section{Motivation} 
\label{sec:motivation}
%This section motivates our constraint-aware tuning framework. We demonstrate that existing ML-based tuners waste substantial budget due to invalid configurations (M1), despite the availability of rich structural knowledge in DBMS documentation (M2). We show that recent advances in LLMs enable scalable dependency extraction but introduce reliability risks (M3). Finally, we identify a fundamental gap: current tuning frameworks lack a systematic mechanism for enforcing structural knob dependencies during optimization (M4).

This section motivates constraint-aware tuning by showing that existing ML-based systems waste budget on invalid configurations despite documented structural dependencies, and that current frameworks lack a systematic mechanism for enforcing such dependencies during optimization.

\noindent\textbf{M1: Tuning budget waste significantly increases the cost of ML-based DBMS tuning.}
State-of-the-art ML-based tuning methods typically require hundreds of iterations to converge to a high-quality DBMS configuration, resulting in substantial tuning overhead. However, a non-trivial portion of this budget is effectively wasted. In practice, many sampled configurations are either invalid or uninformative.
 For example, in our preliminary experiments on PostgreSQL using the TPC-C workload with 46 knobs and documentation-specified search ranges, approximately 35\% of tuning iterations led to DBMS crashes. Additionally, numerous configurations violate knob dependency constraints (e.g., \knob{superuser\_reserved\_connections} must be strictly smaller than \knob{max\_connections}), producing executions that are rejected or semantically meaningless. 
Such wasted iterations not only increase operational cost but also slow model convergence. ML-based optimizers rely on observed performance signals to update surrogate models; invalid or low-quality configurations provide little useful information. As a result, substantially more iterations are required before the model forms reliable predictions, particularly in high-dimensional and constraint-heavy configuration spaces.

\begin{figure}[t]
    \centering
    \includegraphics[width=\linewidth]{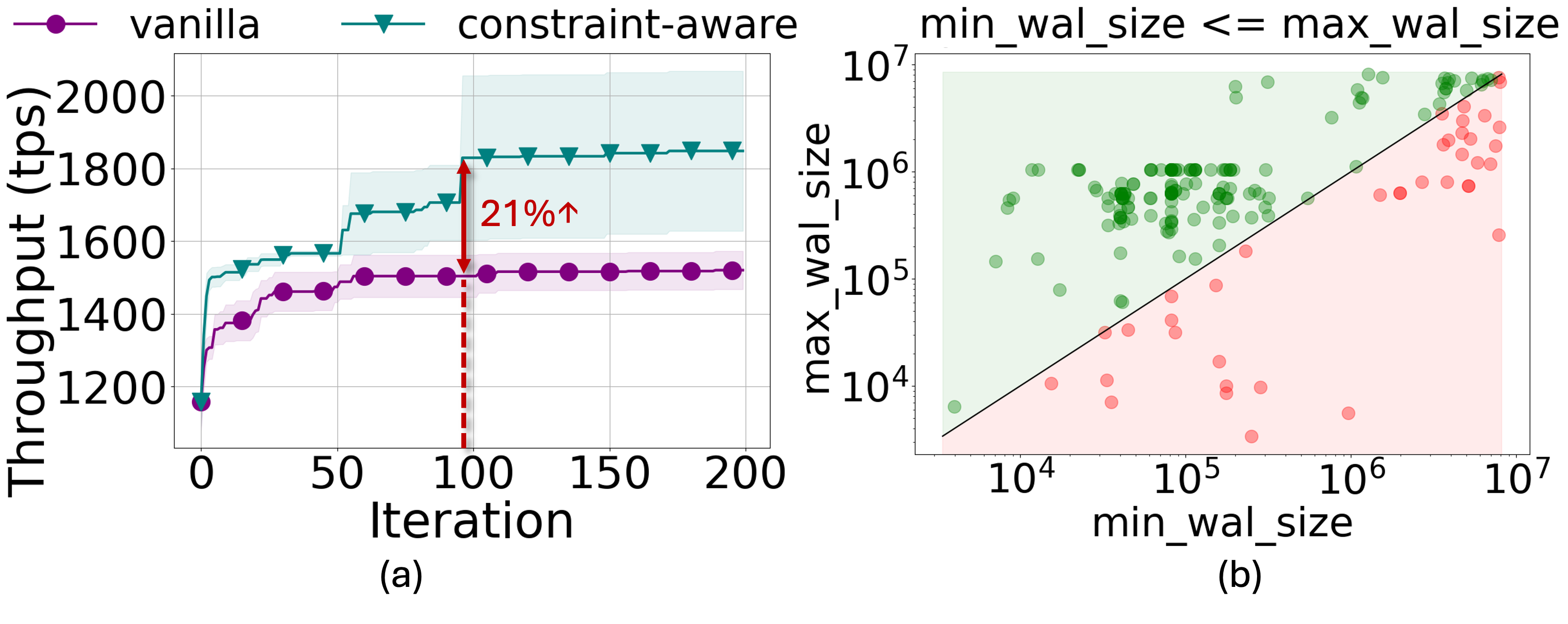}
    \caption{(a) The preliminary tuning performance improvement on the TPC-C workload using 10 knobs with/without constraint-aware tuning; (b) The feasible (green) and infeasible (red) samples/region during tuning based on the ordering constraint \texttt{min\_wal\_size} $<$  \texttt{max\_wal\_size}.}
    \label{fig:motivation-m2}
\end{figure}

\noindent\textbf{M2: Domain knowledge exists but remains underutilized.}
Extensive DBMS tuning knowledge has accumulated over decades and is documented in manuals, best practices, and expert guidelines. Prior work has incorporated portions of this knowledge into automated tuning frameworks, including suggested values \cite{10.1145/3514221.3517843, 10.1145/3514221.3520171}, range constraints, and special-value handling \cite{10.14778/3659437.3659449, 10.1145/3733620.3733641, 10.1145/3626246.3654739}. 
But, explicit \emph{inter-knob dependency relationships} remain largely unmodeled. These dependencies directly affect both configuration validity and achievable performance. Existing ML-based tuners implicitly attempt to learn such relationships from sampled observations, resulting in high sample complexity and slow convergence in high-dimensional domains. There is no guarantee that a learned surrogate model faithfully captures structural knob relationships.

As illustrated in Figure~\ref{fig:motivation-m2}, explicitly modeling knob dependencies significantly improves tuning efficiency. Incorporating constraints yields approximately 21\% higher throughput within the first 100 iterations, demonstrating faster convergence. Furthermore, enforcing constraints such as \knob{min\_wal\_size} $\leq$ \knob{max\_wal\_size} eliminates nearly 50\% of the nominal search region and reduces invalid samples by 19.5\% during tuning. These results highlight that dependency constraints meaningfully shrink the effective search space and reduce wasted trials. 
These observations motivate a systematic mechanism for modeling and enforcing structural inter-knob dependencies during optimization.

%Despite their importance, such dependency knowledge is often implicit, scattered across documentation, and primarily accessible to experienced DBAs. Existing automated tuning approaches either overlook these dependencies or encode them in limited, static forms, leaving substantial efficiency gains unrealized.

\noindent\textbf{M3: LLMs enable dependency extraction but introduce reliability challenges.} 
DBMS dependency constraints are expressed in diverse linguistic forms and are often scattered across documentation, making direct extraction challenging. Relation type and direction are easily confused \cite{pan-etal-2025-climateie,ZhangCPCLD24,zhang-etal-2025-dynclean}, especially when multiple knobs interact. Extraction errors are high-impact: hallucinated or incorrectly oriented constraints may prune feasible configurations or bias optimization.
To address this, we design a two-part solution (Section \ref{sec:constraint_extraction}): (i) a precision-first hybrid extraction core that combines schema-constrained LLM parsing with symbolic normalization and rule-based patterns, and (ii) an agentic reliability guardrail that applies self-reflection and LLM-as-judge verification before final conflict resolution. This design improves extraction trustworthiness while remaining lightweight and practical.

\noindent\textbf{M4: Structural constraints require dedicated integration into optimization.}
Modern optimization frameworks such as BO and RL are designed for black-box search over box-constrained domains. 
While effective for many tuning tasks, this abstraction does not directly account for deterministic inter-knob dependencies that induce structured feasible regions. 
Thus, structural constraint information must be incorporated through explicit modeling choices rather than relying solely on standard optimizer interfaces. 
This motivates a constraint-aware optimization framework that integrates structural dependency modeling directly into the search process.

%\textbf{M4: Existing optimizers lack systematic support for structural constraints.} Current optimization algorithms, including BO and RL, are designed primarily for black-box search and natively support only basic per-knob range constraints \cite{10.14778/3659437.3659449, 10.1145/3733620.3733641, 10.1145/3626246.3654739}. They do not directly support richer structural knowledge such as inter-knob dependencies without substantial modifications. Thus, documented constraint information is typically ignored or enforced through ad hoc heuristics. Such approaches are neither systematic nor scalable, particularly in high-dimensional configuration spaces where the search is inherently combinatorial. Without a dedicated constraint-aware mechanism, optimization remains inefficient and prone to exploring invalid or suboptimal configurations.

Together, these observations suggest that effective DBMS tuning requires enforcing documented knob dependencies directly within the optimization process. We formalize this constraint-aware tuning problem next.
\section{Problem Formulation}
\label{sec:problem_formulation}

\begin{figure*}
    \centering
    \includegraphics[width=0.85\linewidth]{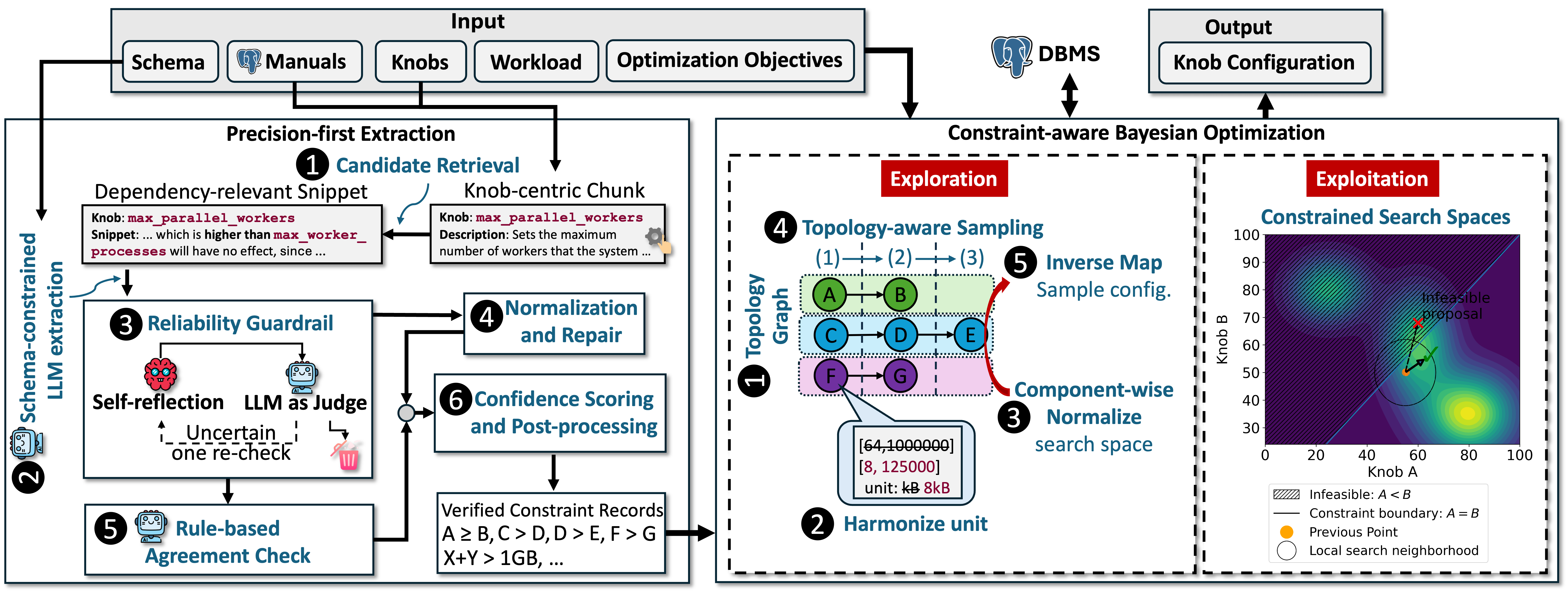}
    % \vspace{-1em}
    \caption{System overview of \Sys. 
\Sys{} consists of (1) a precision-first extraction pipeline that automatically derives ordering constraints from DBMS manuals, and (2) a constraint-aware Bayesian optimization engine that explicitly integrates these structural dependencies during both exploration and exploitation.}
    \label{fig:catune_design}
    % \vspace{-1.5em}
\end{figure*}

In this section, we formalize the constraint-aware DBMS knob configuration problem. We first revisit the classical formulation used in prior tuning work, and then introduce structural knob dependency constraints derived directly from DBMS manuals. We conclude by discussing the computational complexity of the resulting problem.

\noindent\textbf{Classical DBMS Knob Tuning.} 
%Modern DBMS expose tens to hundreds of configurable knobs that control memory allocation, concurrency, buffering, and execution behavior \cite{10.14778/3450980.3450992}. 
Let $\theta = (\theta_1, \ldots, \theta_n)$ denote the configuration vector, where each knob $\theta_i$ takes values in domain $\Theta_i$ (continuous, integer, or categorical). 
The nominal configuration space is
$
\boldsymbol{\Theta} = \Theta_1 \times \cdots \times \Theta_n.
$

Given a workload $W$, the DBMS performance (e.g., throughput or latency) is modeled as an unknown function
$
f : \boldsymbol{\Theta} \rightarrow \mathbb{R},
$
which can only be evaluated by executing the workload under configuration $\theta$.
The classical tuning problem is therefore defined as \cite{10.14778/3450980.3450992}:
\begin{equation}
\theta^* = 
\underset{\theta \in \boldsymbol{\Theta}}{\arg\max} 
\; f(\theta).
\end{equation}

Most existing tuning approaches treat $\boldsymbol{\Theta}$ as a box-bounded search space and perform black-box optimization over this space.

\noindent\textbf{Constraint-aware Tuning with Ordering Dependencies.}
DBMS manuals document deterministic ordering relationships between knobs. 
These dependencies specify relative constraints such as $\theta_i \le \theta_j$, 
ensuring valid configurations and preventing unsupported parameter combinations. 
Such relationships are known \emph{a priori} from documentation and can be verified without executing the DBMS (Table~\ref{tab:constraint-examples}). 
They restrict the set of admissible configurations before any performance evaluation is performed.
We model these dependencies as a set of pairwise ordering constraints, $\theta_i \le \theta_j, (i,j) \in \mathcal{C}$,
% \begin{equation}
% \theta_i \le \theta_j, \quad (i,j) \in \mathcal{C},
% \end{equation}
where $\mathcal{C}$ denotes the set of ordered knob pairs extracted from documentation.
The resulting feasible configuration space is $\boldsymbol{\Theta}_{\mathcal{C}} =
\left\{
\theta \in \boldsymbol{\Theta}
\;\middle|\;
\theta_i \le \theta_j \ \text{for all } (i,j) \in \mathcal{C}
\right\}.$
% \begin{equation}
% \boldsymbol{\Theta}_{\mathcal{C}} =
% \left\{
% \theta \in \boldsymbol{\Theta}
% \;\middle|\;
% \theta_i \le \theta_j \ \text{for all } (i,j) \in \mathcal{C}
% \right\}.
% \end{equation}

The constraint-aware tuning problem becomes:
\begin{equation}
\theta^* =
\underset{\theta \in \boldsymbol{\Theta}_{\mathcal{C}}}{\arg\max}
\; f(\theta).
\end{equation}

Unlike prior approaches that treat constraints as post-hoc performance conditions checked after execution (e.g., SLA violations), we enforce documented ordering dependencies directly in the search space. 
This eliminates structurally invalid configurations \emph{a priori} and fundamentally changes how candidate configurations are generated during optimization.

\subsection{Computational Complexity}

Even when restricted to ordering constraints, the DBMS knob tuning problem remains computationally hard.

\begin{lemma}
The constraint-aware DBMS knob tuning problem under a black-box objective is NP-hard.
\end{lemma}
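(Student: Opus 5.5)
The plan is a polynomial-time reduction from a known NP-hard problem to a special case of constraint-aware tuning. The special case fixes the workload, restricts every knob to a finite (integer or categorical) domain, and gives the objective $f$ a succinct, polynomial-time computable form. Hardness of this restricted instance implies hardness of the general problem, since a black-box oracle can simply evaluate the succinct $f$. This sidesteps the oracle-complexity subtleties of ``black-box'' hardness. I will read the statement as: given $\boldsymbol{\Theta}$, $\mathcal{C}$ and access to $f$, computing $\theta^*$ (or deciding whether some $\theta \in \boldsymbol{\Theta}_{\mathcal{C}}$ reaches value at least $k$) is NP-hard.

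I would reduce from MAX-SAT, or from 3-SAT in decision form. For each Boolean variable $x_i$, introduce a knob $\theta_i$ with domain $\Theta_i = \{0,1\}$. Take $\mathcal{C} = \emptyset$, or add trivial orderings if a nonempty constraint set is desired, so that $\boldsymbol{\Theta}_{\mathcal{C}} = \{0,1\}^n$. Define $f(\theta)$ as the number of clauses satisfied by the assignment $\theta$. This is computable in polynomial time and is a legitimate (if artificial) performance function. The formula is satisfiable if and only if $\max_{\theta \in \boldsymbol{\Theta}_{\mathcal{C}}} f(\theta) = m$, where $m$ is the number of clauses. Any algorithm for the constraint-aware tuning problem therefore decides SAT. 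The reduction is clearly polynomial: $n$ knobs, $O(n)$ domain description, and $f$ given as a circuit.

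To make the ordering constraints essential rather than vacuous, I would add a variant in which the feasibility structure itself carries the hardness. Each clause gets auxiliary knobs, and ordering constraints $\theta_a \le \theta_b$ encode implications $x_a \Rightarrow x_b$ over $\{0,1\}$. The objective then counts satisfied clauses. This shows that even with ordering constraints present, and even though feasibility checking is trivial (a polynomial check of pairwise inequalities), optimization stays hard. The structure simply mirrors the observation that ordering constraints form a poset over the knobs. I would also note membership: the decision version lies in NP when $f$ is polynomial-time computable, because the certificate is $\theta$ itself.

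The main obstacle is formalizing ``black-box'' honestly. With a pure oracle and no succinct description, hardness is an information-theoretic query lower bound: one needs exponentially many evaluations over $\prod_i |\Theta_i|$ points, by an adversary argument that hides a single maximizer. That is not NP-hardness in the Turing-reduction sense. I would handle this by stating explicitly that the lemma concerns the restricted subclass where $f$ is given by a polynomial-time evaluable representation. If needed, I would add a remark that in the pure oracle model an adversary argument gives an $\Omega(|\boldsymbol{\Theta}_{\mathcal{C}}|)$ query lower bound, and that $|\boldsymbol{\Theta}_{\mathcal{C}}|$ remains exponential in $n$ under ordering constraints (for example, a chain of $n$ knobs over $\{0,1\}$ still has $n+1$ feasible points, but disjoint pairs give $3^{n/2}$).
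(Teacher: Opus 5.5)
Your argument is correct. It starts from the same special case as the paper: binary knobs, $\mathcal{C}=\emptyset$, and hardness of a special case transferring to the general problem. Its key step, however, is genuinely different.

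The paper's sketch argues that for an arbitrary black-box $f:\{0,1\}^n\to\mathbb{R}$ one must, in the worst case, evaluate all $2^n$ configurations, and it calls this NP-hardness. As you note, that is really an unconditional adversary/query lower bound in the oracle model, not a reduction-based hardness result. You instead fix a polynomial-time computable objective (the number of satisfied clauses), give a Karp reduction from SAT, and simulate the oracle by the circuit.

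Your route earns the label ``NP-hard'' in the standard sense, plus NP membership of the decision version. It also locates the source of hardness in $f$, which matches the paper's later remarks that the constraints add no structure to $f$ and that feasibility checking is polynomial. The paper's route gives an exponential lower bound with no complexity assumption, but in a model where NP-hardness is not the right notion. Your closing remark recovers that bound as a complement.

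Two corrections to the optional parts.

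\textbf{The ``variant'' paragraph.} The ordering constraints cannot carry the hardness there. Over $\{0,1\}$, $\theta_a\le\theta_b$ is just the implication $x_a\Rightarrow x_b$, so $\boldsymbol{\Theta}_{\mathcal{C}}$ is the family of closed sets of an implication digraph. Feasibility is trivial, and even maximizing a linear objective over this family is a maximum-weight closure problem solvable by min-cut. In that variant the hardness still comes entirely from the clause-counting $f$, and the per-clause gadget is never specified. The general problem already contains $\mathcal{C}=\emptyset$, or a dummy constraint on fresh knobs that $f$ ignores, so you can simply drop that paragraph.

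\textbf{The final remark.} $|\boldsymbol{\Theta}_{\mathcal{C}}|$ \emph{can} stay exponential but need not. Your own chain example has only $n+1$ feasible points, where exhaustive search is polynomial. Hardness therefore holds for the problem class, not for every fixed constraint set.
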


\begin{proof}[Proof Sketch]
Consider a restricted instance in which all knobs are binary variables 
$\theta_i \in \{0,1\}$ and no ordering constraints are imposed.
Let the performance function be an arbitrary black-box mapping 
$f : \{0,1\}^n \rightarrow \mathbb{R}$. 
In the worst case, identifying the global optimum requires evaluating all 
$2^n$ possible configurations.
Thus, even without constraints, the tuning problem is combinatorial and NP-hard.
Since adding ordering constraints only restricts the feasible region without 
introducing exploitable structure in $f$, the general problem remains NP-hard.
\end{proof}

\noindent\textbf{Why the problem remains hard.}
The intrinsic difficulty stems from the objective function $f$, which is 
unknown, non-convex, and accessible only through 
costly workload executions. 
Ordering constraints reduce the feasible search region but do not impose 
structure on $f$ itself. 
Consequently, global optimization still requires exploration of a 
combinatorial configuration space.

\noindent\textbf{Feasibility checking.}
While optimization remains hard, checking whether the ordering constraints 
admit any feasible configuration is tractable. 
Pairwise ordering constraints of the form 
$\theta_i \le \theta_j$ can be represented as a system of difference 
constraints. 
Detecting inconsistency reduces to cycle detection in the induced directed graph 
or, equivalently, to a shortest-path feasibility check (e.g., via 
Bellman--Ford) \cite{math12162590}. 
Thus, verifying that the constraint set is consistent can be done in 
polynomial time, even though optimizing over the feasible region is NP-hard.
\section{\Sys\ Design}

\Sys~ is a constraint-aware DBMS tuning system that systematically incorporates documented structural knob dependencies into the optimization process. By enforcing these constraints during search, \Sys~ shrinks the effective configuration space and reduces wasted trials.

\noindent\textbf{Workflow.} Figure~\ref{fig:catune_design} illustrates the overall workflow. The user provides (i) the target DBMS (e.g., PostgreSQL), (ii) the corresponding manuals, (iii) a set of tuning knobs with their search ranges (which can be obtained via existing techniques~\cite{10.14778/3538598.3538604, 10.1145/3514221.3517843, 10.1145/3733620.3733641}), (iv) optimization objectives (e.g., throughput or latency), and (v) the target workload (e.g., TPC-H or TPC-C). 

\Sys~ first extracts knob dependency relationships from the manuals and refines them into machine-readable constraints (e.g., JSON format). It then performs constraint-aware BO to explore only feasible configurations and iteratively identify improved settings until the user-specified resource budget (e.g., maximum time or number of trials) is exhausted.

\sloppy
\noindent\textbf{Components.}
\Sys~ consists of two major components: (1) precision-first extraction (Section~\ref{sec:constraint_extraction}), and (2) constraint-aware Bayesian Optimization (Section~\ref{sec:constraint_aware_tuning}).

\noindent\textit{\underline{Precision-first Extraction.}}
To safely inject documentation-driven dependencies into tuning, \Sys~ adopts a precision-first extraction pipeline that favors reliable abstention over aggressive recall. 
Motivated by the reliability risks discussed in M3, the module couples (i) schema-constrained LLM parsing with symbolic normalization and high-precision rule templates, and (ii) an agentic guardrail that applies self-reflection and LLM-as-judge verification before conflict resolution. 
The output is a set of machine-readable constraint records (e.g., knob pair, relation type, optional activation condition, and evidence span) that can be deterministically validated and directly consumed by the downstream optimizer.

\noindent\textit{\underline{Constraint-aware BO.}}
\Sys~ modifies standard BO to enforce structural dependencies during both exploration and exploitation.

\textit{Exploration phase.}
Sampling under cross-knob constraints leads to low acceptance rates and wasted trials. 
To generate feasible configurations, \Sys~ performs five steps:
\redcircled{1} constructs a topology graph encoding knob dependency relationships;
% \redcircled{2} introduces control knobs to activate feature-dependent constraints when their conditions hold;
\redcircled{2} harmonizes heterogeneous units across knobs to enable consistent comparisons;
\redcircled{3} applies component-aware min–max normalization to transform knobs into a comparable search space;
\redcircled{4} performs topology-aware sampling in the normalized space, ensuring all generated configurations satisfy active constraints without rejection \cite{10.5555/3666122.3667026};
\redcircled{5} inverse-maps normalized samples back to their original value domains.

\textit{Exploitation phase.}
During exploitation, ordering and dependency constraints explicitly restrict the feasible region considered by the acquisition function. 
The surrogate model proposes new trials only within this constrained space, preventing evaluations in infeasible regions and improving sample efficiency.

\section{Constraint-Aware BO}
\label{sec:constraint_aware_tuning}

In this section, we describe how \Sys\ modifies standard BO to enforce structural knob dependencies during both exploration and exploitation. 
The key idea is to restrict the optimizer to valid configurations \emph{by construction}, thereby eliminating infeasible evaluations and improving sample efficiency.
Algorithm~\ref{alg:bo_db_constrained} outlines the overall procedure. 
We redefine the feasible configuration domain using the extracted dependency constraints (Line~4), so that both exploratory sampling (Line~14) and acquisition maximization (Line~16) operate only over valid configurations. 
Hence, the optimizer no longer wastes trials on structurally invalid settings.

\begin{algorithm}[t]
\caption{Constraint-Aware Bayesian Optimization for DBMS Tuning (differences to standard BO in {\color{blue}blue}).}
\label{alg:bo_db_constrained}
\LinesNumbered
\SetKwFunction{Train}{Train}
\SetKwFunction{Suggest}{SuggestNext}
\SetKwFunction{Build}{BuildTopoGraph}
\SetKwFunction{TopoSampler}{TopoSampler}
\SetKwFunction{Project}{ProjectToFeasible}
\SetKwFunction{Feasible}{IsFeasible}
\SetKwFunction{Norm}{Normalize}
\SetKwFunction{Inverse}{InverseMap}

\KwIn{
Database system $\mathsf{DB}$; nominal configuration space $\boldsymbol{\Theta}$;
{\color{blue}constraints $\mathcal{C}$};
performance metric $f(\cdot)$; objective type $\mathsf{obj} \in \{\text{throughput},\text{latency}\}$;
initial design size $n_0$; evaluation budget $T$.
}
\DontPrintSemicolon
\KwOut{Best configuration $\boldsymbol{\theta}^*$ \tcp*{Best among trials}}
\PrintSemicolon

Let $\mathcal{D} \leftarrow \emptyset$\tcp*{Tuning trajectories}
{\color{blue}Let $G\leftarrow \Build(\mathcal{C}, \boldsymbol{\theta})$}\tcp*{Build topology graph, see Algorithm \ref{alg:build-topology}}

{\color{blue} $\widehat{\boldsymbol{\Theta}} = \Norm(\boldsymbol{\Theta}, G)$}\tcp*{See Alg. \ref{alg:space-normalization}}

{\color{blue}Construct feasible space $\boldsymbol{\widehat{\Theta}}_{\mathcal{C}} \triangleq \{\boldsymbol{\hat{\theta}}\in\boldsymbol{\widehat{\Theta}}\mid c(\boldsymbol{\hat{\theta}})=\texttt{True},\ \forall c\in\mathcal{C}\}$}\;

\BlankLine
\textbf{(1) Feasible initial design.}\;
{\color{blue}$\{\boldsymbol{\hat{\theta}}_1,\ldots,\boldsymbol{\hat{\theta}}_{n_0}\} \leftarrow \TopoSampler(\boldsymbol{\widehat{\Theta}}, G, n_0)$}\tcp*{Alg. \ref{alg:topo-sampling-root-n}}
\For{$i=1$ \KwTo $n_0$}{
    Apply ${\color{blue}\boldsymbol{\hat{\theta}}_i}$ to $\mathsf{DB}$ and observe $y_i \leftarrow f({\color{blue}\boldsymbol{\hat{\theta}}_i})$\;
    $\mathcal{D} \leftarrow \mathcal{D} \cup \{({\color{blue}\boldsymbol{\hat{\theta}}_i}, y_i)\}$\;
}

\BlankLine
\textbf{(2) Iterative BO over the feasible space.}\;
\For{$t=n_0+1$ \KwTo $T$}{
    {\color{blue}$\mathcal{M} \leftarrow \Train(\mathcal{D})$}\tcp*{train surrogate on feasible data only}

    \eIf{$\mathrm{rand}(0,1) < p$}{
        {\color{blue}${\color{blue}\boldsymbol{\hat{\theta}}_t} \leftarrow \TopoSampler(\boldsymbol{{\color{blue}\widehat{\Theta}}}, G, 1)$}\tcp*{feasible exploration}
    }{
        {\color{blue}$\boldsymbol{\hat{\theta}}_t \leftarrow \Suggest(\mathcal{M}, \boldsymbol{\widehat{\Theta}}_{\mathcal{C}}, \mathcal{D})$}\tcp*{maximize acquisition within $\boldsymbol{\widehat{\Theta}}_{\mathcal{C}}$}
        % Optional if Suggest is unconstrained:
        % {\color{blue}\If{$\neg\Feasible(\boldsymbol{\theta}_t,\mathcal{C},\mathcal{A})$}{
        %     $\boldsymbol{\theta}_t \leftarrow \Project(\boldsymbol{\theta}_t, \boldsymbol{\Theta}_{\mathcal{C}}, G, \mathcal{A})$
        % }}
    }
    ${\color{blue}\boldsymbol{\theta_t} \leftarrow \Inverse(\boldsymbol{\hat{\theta}}_t)}$\tcp*{See Eq. \ref{eq:inv_norm} and \ref{eq:inv_harmonize}}
    
    Apply $\boldsymbol{\theta}_t$ to $\mathsf{DB}$ and observe $y_t \leftarrow f(\boldsymbol{\theta}_t)$\;
    $\mathcal{D} \leftarrow \mathcal{D} \cup \{(\boldsymbol{\theta}_t, y_t)\}$\;
}

\BlankLine
\eIf{$\mathsf{obj} = \mathrm{throughput}$}{
    $\boldsymbol{\theta^*} \leftarrow \arg\max_{(\boldsymbol{\theta},y)\in\mathcal{D}} y$\;
}{
    $\boldsymbol{\theta^*} \leftarrow \arg\min_{(\boldsymbol{\theta},y)\in\mathcal{D}} y$\;
}
\Return $\boldsymbol{\theta^*}$\;
\end{algorithm}

\subsection{Exploitation Phase}
\label{sec:exploitation_phase}

During exploitation, we restrict acquisition maximization to the feasible configuration space defined by the dependency constraints. 
Rather than modifying the surrogate model itself, we enforce constraint validation during candidate selection.
Specifically, when maximizing the acquisition function within a local neighborhood $\mathcal{N}(\boldsymbol{\theta}_t)$, candidate configurations that violate any dependency constraint are discarded. 
Thus, acquisition maximization is effectively performed over 
$
\mathcal{N}(\boldsymbol{\theta}_t) \cap \boldsymbol{\Theta}_{\mathcal{C}},
$
where $\boldsymbol{\Theta}_{\mathcal{C}}$ denotes the feasible space restricted by dependency constraints.

Figure~\ref{fig:catune_design} (Exploitation) illustrates this process under the ordering constraint $A \ge B$. 
While the unconstrained acquisition direction may point toward an infeasible region ($A < B$), constraint enforcement restricts the step to the admissible portion of the neighborhood. 
This ensures that all exploitation steps remain valid without altering the surrogate model.

% \begin{figure*}[]
%     \centering
%     \begin{minipage}[t]{0.2\linewidth}
%         \centering
%         \includegraphics[width=\linewidth]{figures/topological_graph.png}
%         \caption{The topology graphs induced by ordering constraints. }
%         \label{fig:topological_graph}
%     \end{minipage}
%     \hfill
%     \begin{minipage}[t]{0.75\linewidth}
%         \centering
%         \includegraphics[width=\linewidth]{figures/workflow_example.png}
%         \caption{An example of the exploration phase. }
%         \label{fig:workflow_example}
%     \end{minipage} 
% \end{figure*}

% \begin{figure}
%     \centering
%     \includegraphics[width=0.4\linewidth]{figures/topological_graph.png}
%     \caption{The topology graphs induced by ordering constraints.}
%     \label{fig:topological_graph}
% \end{figure}

\subsection{Exploration Phase}
\label{sec:exploration_phase}

\begin{figure}
    \centering
        \includegraphics[width=.9\linewidth]{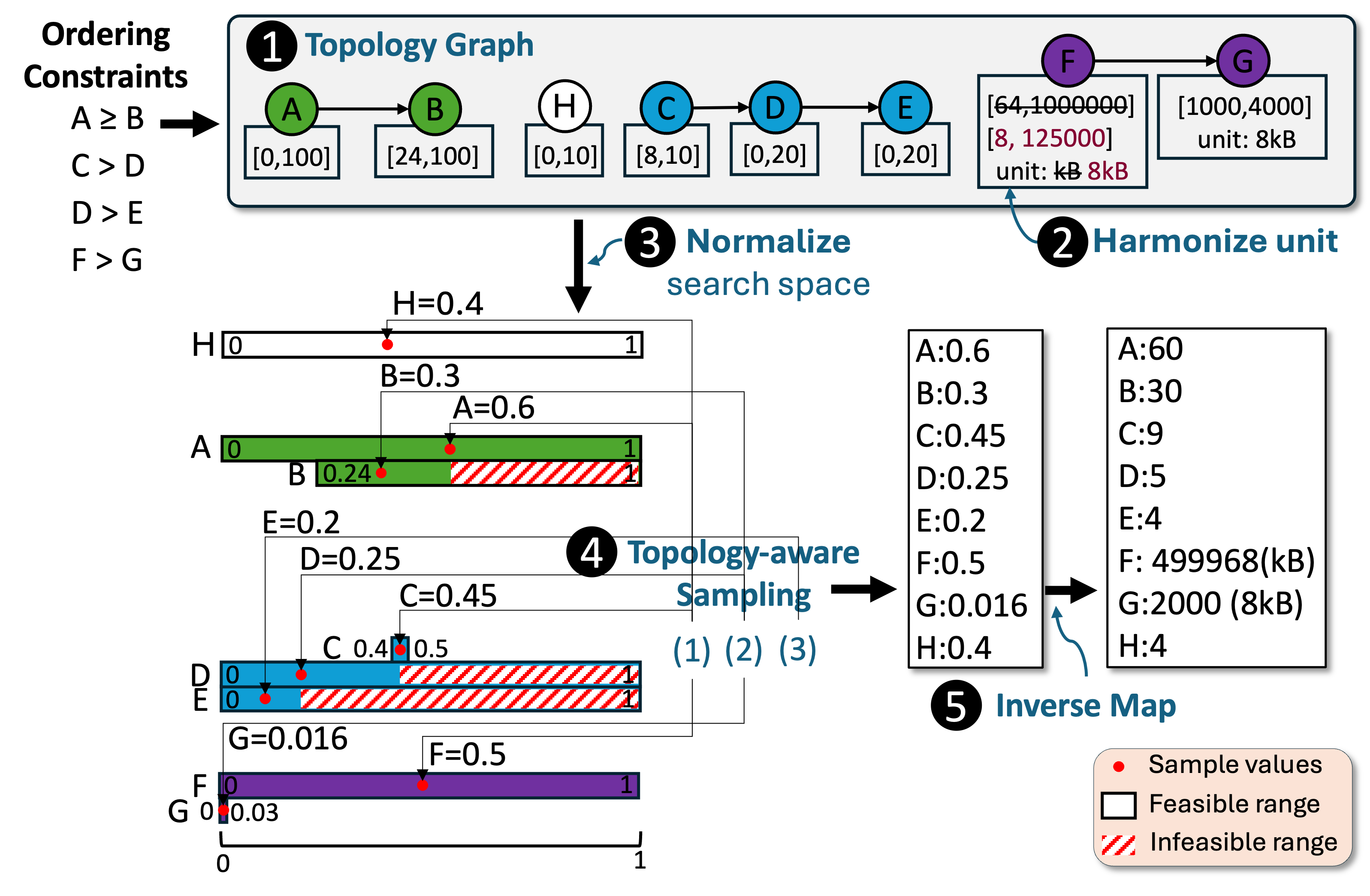}
        % \vspace{-1.8em}
        \caption{End-to-end illustration of the exploration phase. Given four ordering constraints, \Sys{} constructs a topology graph, harmonizes heterogeneous units, and applies component-wise normalization. Topology-aware sampling then follows the dependency order to generate feasible configurations with dynamically constrained ranges, after which sampled values are mapped back to their original knob units.}

        \label{fig:workflow_example}
        % \vspace{-1em}
\end{figure}
% End-to-end illustration of the exploration phase. Given four ordering constraints, including a chained dependency ($C > D > E$), \Sys{} constructs a topology graph in which the independent knob $H$ appears as an isolated node. Heterogeneous units, such as those of $F$ and $G$, are harmonized for consistent comparison. Each connected component is then normalized using component-wise min--max normalization while preserving the relative ordering structure. Topology-aware sampling generates a feasible configuration by following the dependency order and dynamically restricting feasible ranges. Finally, normalized values are mapped back to their original knob units through inverse transformation.

To enforce ordering constraints during the exploration phase, we design a five-step procedure that systematically ensures constraint consistency throughout the search process. Figure~\ref{fig:workflow_example} illustrates an end-to-end example of this workflow.

\subsubsection{Topology Graph}

Ordering constraints naturally induce a dependency structure among knobs. 
We represent this structure as a directed graph $G=(V,E)$, where each knob is a node and each ordering constraint defines a directed edge.
For a constraint $(a, \mathrm{op}, b)$, the edge direction reflects the sampling dependency implied by the operator. 
If $\mathrm{op} \in \{<, \le\}$ (i.e., $a$ must not exceed $b$), we add an edge $b \rightarrow a$, indicating that $b$ must be assigned before $a$. 
If $\mathrm{op} \in \{>, \ge\}$, we add an edge $a \rightarrow b$, meaning that $a$ must be assigned prior to $b$. 
Algorithm~\ref{alg:build-topology} describes this construction.

The resulting graph explicitly captures sampling dependencies among knobs and enables constraint-aware generation of configurations via topological traversal (Section~\ref{sec:topo_sampling}). 
After inserting all edges, we check for directed cycles. 
A cycle indicates mutually inconsistent ordering constraints and thus an invalid constraint set. 
Otherwise, the graph forms a directed acyclic graph (DAG) that defines a partial order over dependent knobs, while independent knobs remain isolated nodes, as illustrated in Figure~\ref{fig:workflow_example}.
This representation separates constraint reasoning from value sampling, allowing the optimizer to enforce dependencies without modifying the surrogate model.

\begin{algorithm}
\caption{Build a Topology Graph from Ordering Constraints}
\label{alg:build-topology}
\LinesNumbered
\KwIn{ordering constraint list $\mathcal{R}=\{(a,\mathrm{op},b)\}$, where $\mathrm{op}\in \{<, \le, >, \ge \}$; knob set $\mathcal{K}$.}
\KwOut{Directed graph $G=(V,E)$ with edge labels storing $\mathrm{op}$.}

$V \leftarrow \emptyset$, $E \leftarrow \emptyset$\;

\ForEach{$(a,\mathrm{op},b)\in \mathcal{R}$}{
    $V \leftarrow V \cup \{a,b\}$\;
    \uIf{$\mathrm{op}\in\{<,\le\}$}{
        $E \leftarrow E \cup \{(b \rightarrow a, \mathrm{op})\}$\;
    }
    \Else{
        $E \leftarrow E \cup \{(a \rightarrow b, \mathrm{op})\}$\;
    }
}

\tcp{Append independent knobs}
\ForEach{$k \in \mathcal{K}$}{
    \If{$k \notin V$}{
        $V \leftarrow V \cup \{k\}$\;
    }
}

\If{$G$ contains a directed cycle}{
    \Return \textsc{Fail}\;
}

\Return $G$\;

\end{algorithm}

\begin{algorithm}[t]
\caption{Search Space Normalization}
\label{alg:space-normalization}

% \SetKwFunction{Train}{Train}
% \SetKwFunction{Suggest}{SuggestNext}
% \SetKwFunction{Sample}{RandomSample}
% \SetKwFunction{Build}{BuildTopoGraph}
\SetKwFunction{Harmonize}{UnitHarmonize}
\SetKwFunction{Interval}{SharedInterval}
\SetKwFunction{Norm}{MinMaxNormalize}
% \SetKwFunction{TopoSampler}{TopoSampler}
\LinesNumbered
\KwIn{Knobs $\mathcal{K}=\{\theta_i\}$ with domains $[\ell_i,u_i]$ and unit $\mathsf{unit}_{i}$; topology graph $G$}
\KwOut{Normalized domains $\{\widehat{\Theta}_i\}$}

\ForEach{$\mathcal{G}_k$ in $G$}{
    \ForEach{$\theta_i \in \mathcal{G}_k$}{
        \textbf{1. Unit Harmonization}\; 
        $ \tilde{\theta}_i \leftarrow Harmonize(\theta_i)$ \tcp{Eq. \ref{eq:harmonize}} 

        \textbf{2. Component-wise min-max Normalization}\;
        $\ell^{(k)}, u^{(k)} = \Interval(\tilde{\theta}_i, \mathcal{G}_k)$ \tcp{Equation \ref{eq:component-interval}}
        $\widehat{\Theta}_i \leftarrow \Norm(\tilde{\theta}_i)$\tcp{Eq. \ref{eq:minmax_normalize}}
        % \quad $\widehat{\Theta}_i \leftarrow 
        %     \left[\frac{\ell_i-\ell^{(k)}}{u^{(k)}-\ell^{(k)}}, 
        %           \frac{u_i-\ell^{(k)}}{u^{(k)}-\ell^{(k)}}\right]$
        %  } 
    }
}

\Return $\{\widehat{\Theta}_i\}$\;
\end{algorithm}

\subsubsection{Search Space Normalization}

Ordering constraints require value comparison across knobs. 
However, DBMS knobs often use heterogeneous units. 
For example, \knob{shared\_buffers} may be specified in pages, while \knob{max\_wal\_size} is specified in MB, making raw comparisons invalid.
To ensure consistent constraint enforcement, we perform two preprocessing steps: (i) unit harmonization and (ii) component-wise normalization (Algorithm~\ref{alg:space-normalization}).

\textbf{Unit Harmonization.}
For each semantic knob type, such as memory or time, \Sys{} automatically extracts the knob unit from the documentation and converts all compatible knobs to a canonical unit.
For a knob $\theta_i$ with domain $\Theta_i=[\ell_i,u_i]$ and unit $\mathsf{unit}_i$, let $\mathsf{unit}^{(t)}$ denote the canonical unit for its type $t$. 
We apply a deterministic linear conversion:
\begin{equation}
\label{eq:harmonize}
\tilde{\theta}_i = \phi_i(\theta_i) = \alpha_i \theta_i, 
\qquad 
\tilde{\Theta}_i = [\alpha_i \ell_i, \alpha_i u_i],
\end{equation}
where %$\alpha_i$ converts $\mathsf{unit}_i$ to $\mathsf{unit}^{(t)}$.
 $\alpha_i$ is the deterministic conversion factor from $\mathsf{unit}_i$ to $\mathsf{unit}^{(t)}$. 

\textbf{Component-wise Normalization.}
%Since ordering constraints form connected components in the topology graph, we normalize knobs within each component consistently. 
Since ordering constraints connect knobs that must be numerically comparable (after unit harmonization), each connected component defines a group of knobs sharing a common comparison space.
For a component $\mathcal{G}_k$, we define the shared interval:
\begin{equation}
\label{eq:component-interval}
\ell^{(k)} = \min_{\theta_i \in \mathcal{G}_k} \tilde{\ell}_i, 
\qquad
u^{(k)} = \max_{\theta_i \in \mathcal{G}_k} \tilde{u}_i,
\end{equation}
and normalize each knob $\theta_i \in \mathcal{G}_k$ as
\begin{equation}
\label{eq:minmax_normalize}
\hat{\theta}_i = \frac{\tilde{\theta}_i - \ell^{(k)}}{u^{(k)} - \ell^{(k)}}.
\end{equation}
Different components are normalized independently.

\begin{example}
Suppose $\texttt{work\_mem} \leq \texttt{shared\_buffers}$ with domains $[4,512]$ MB and $[128,8192]$ MB, respectively.
Independent normalization maps both domains to $[0,1]$, distorting their relative scale.
Component-wise normalization instead uses the shared interval $[4,8192]$, preserving ordering consistency in the normalized space.
\end{example}

\textbf{Inverse Mapping.}
Before applying a configuration to the DBMS, we invert the preprocessing steps. 
For $\hat{\theta}_i \in [0,1]$ in component $\mathcal{G}_k$,
\begin{equation}
\label{eq:inv_norm}
\tilde{\theta}_i = 
\hat{\theta}_i \left(u^{(k)} - \ell^{(k)}\right) + \ell^{(k)},
\end{equation}
and the original knob value is recovered as
\begin{equation}
\label{eq:inv_harmonize}
\theta_i = \tilde{\theta}_i / \alpha_i.
\end{equation}

\begin{example}
Suppose \knob{work\_mem} belongs to a connected component with shared interval
$\ell^{(k)} = 4$ MB and $u^{(k)} = 8192$ MB.
Assume its harmonization factor is $\alpha_i = 1$ (already expressed in MB).
If the optimizer proposes a normalized value $\hat{\theta}_i = 0.5$,
the harmonized value is recovered via Equation~\eqref{eq:inv_norm}:
$\tilde{\theta}_i
=
0.5 \times (8192 - 4) + 4
=
4098 \text{ MB}$.
% \[
% \tilde{\theta}_i
% =
% 0.5 \times (8192 - 4) + 4
% =
% 4098 \text{ MB}.
% \]
Since $\alpha_i = 1$, the original knob value is $\theta_i = 4098$.

If instead the knob were originally specified in pages
(e.g., 1 page = 8 KB) and $\alpha_i$ converted pages to MB,
Equation~\eqref{eq:inv_harmonize} would restore the value to its native unit
before applying it to the DBMS.
This design allows \Sys{} to operate in a unit-consistent normalized space during optimization, while ensuring that deployed configurations remain expressed in their native DBMS units. 
By decoupling optimization from unit heterogeneity, we preserve semantic correctness without complicating constraint enforcement.
\end{example}

\begin{algorithm}[t]
\caption{ Topological Sampler}
\label{alg:topo-sampling-root-n}
\SetKwFunction{Harmonize}{UnitHarmonize}
\SetKwFunction{Norm}{MinMaxNormalize}
\LinesNumbered
\KwIn{
Configuration space $\mathcal{C}$ with $d$ knobs;
topology graph $G=(V,E)$;
% constraint activation rules $\mathcal{A}$;
number of samples $N$.
}
\KwOut{A set of valid configurations $\mathcal{S} = \{\mathbf{c}_1,\dots,\mathbf{c}_N\}$.}

Compute a topological order $\pi \leftarrow \mathrm{TopoSort}(G)$\\
% Let $\mathcal{R} \leftarrow \{v \in V \mid \mathrm{indegree}(v)=0\} \cup \{hp\in\mathcal{C} \mid hp\notin V\}$ \tcp*[f]{root knobs and independent knobs}\\
Let $\mathcal{R} \leftarrow \{v \in V \mid \mathrm{indegree}(v)=0\}$ \tcp*[f]{root knobs and independent knobs}\\
Initialize $\mathcal{S} \leftarrow \emptyset$ \tcp*[f]{sample set}

% \BlankLine
\tcp{(1) Sample roots}
Sample $\mathbf{U} \in [0,1]^{N \times |\mathcal{R}|}$ for $\mathcal{R}$ using LHS\;
\For{$i = 1 \dots N$}{
    Initiate an empty configuration $\mathbf{c}_{i}$\;
    \For{$j = 1 \dots |\mathcal{R}|$}{
        $\mathbf{c}_{i}[hp_{j}] \leftarrow \text{convert}(u_{ij}, hp_j)$ 
        % \eIf{$u_{ij}\in \mathbf{U}$ is continuous}{
        %     $x_{ij} \leftarrow \text{scale}(u_{ij}, hp_j)$\;
        % }{
        %     $x_{ij} \leftarrow \text{map}(u_{ij}, hp_j)$\;
        % }
    }
}
\tcp{(2) Sample remaining knobs in topological order}
\For{$i=1$ \KwTo $N$}{
    % Initialize partial assignment $\mathbf{v}\leftarrow \emptyset$
    
    \ForEach{$x \in \pi \setminus \mathcal{R}$}{
        % \If{$\neg \mathrm{Active}(x,\mathbf{c}_{i},\mathcal{A})$}{
        %     \textbf{continue}
        % }

        Initialize feasible space $I_x = \text{def\_space}(x) = [l_{x}, h_{x}]$\\
        \ForEach{incoming edge $(p \rightarrow x, \mathrm{op})\in E$}{
            $h_x \leftarrow \min(h_x, \mathbf{c}_{i}[p])$ 
            \tcp{Tighten $I_x$}
        }
        % $I_{x}= [l_{x}, h_x]$ \\
        Sample $\mathbf{c}_{i}[x]$ uniformly from $I_x$\\
        
    }

    % Construct configuration $\mathbf{c}_i$ from $\mathbf{v}$\\
    % Add $\mathbf{c}_i$ to $\mathcal{S}$
    $S \leftarrow S \cup \{\mathbf{c}_{i}\}$
}

\Return $\mathcal{S}$
\end{algorithm}
\subsubsection{Topology-aware Sampling}
\label{sec:topo_sampling}
To enable efficient exploration of the feasible region, we propose a topology-aware sampler that eliminates the inefficiencies of rejection-based constrained sampling.

\noindent\textbf{Motivation.} A straightforward way to enforce ordering constraints is rejection-based sampling: repeatedly draw a configuration from the full Cartesian product and discard it if any constraint is violated~\cite{hutter2015smacmanual}. We compare this and other post-hoc constraint-handling strategies in Section~\ref{sec:efficiency_topo_sampling}.
While simple, these post hoc approaches become impractical when the number of ordering constraints increases, especially when they are chained, and their search spaces are extremely imbalanced. 
In such cases, the feasible region may occupy only a small fraction of the full search space, and the probability of sampling a valid configuration drops multiplicatively as constraints accumulate. 
As a result, the optimizer may require an excessive number of trials to obtain a single feasible configuration.

% \textbf{Why Rejection-based Strategy Fails.}
% A straightforward way to enforce ordering constraints is a rejection-based strategy: repeatedly draw a configuration from the full Cartesian product and discard it if any constraint is violated \cite{hutter2015smacmanual}. 
% While simple, this approach becomes impractical when the number of ordering constraints increases, especially when they are chained, and their search spaces are extremely imbalanced. 
% In such cases, the feasible region may occupy only a small fraction of the full search space, and the probability of sampling a valid configuration drops multiplicatively as constraints accumulate. 
% As a result, the optimizer may require an excessive number of trials to obtain a single feasible configuration.

\noindent\textit{Case Study.}
Consider the constraints 
\texttt{superuser\_reserved\_ connections} (range: [0, 262143]) $<$ \texttt{max\_connections} (range: [24, 100]) and 
\knob{max\_wal\_senders} (range: [2, 8000]) $<$ \texttt{max\_connections}. 
Due to highly imbalanced value ranges, the probability of randomly satisfying each constraint can be extremely small. 
When combined, the joint acceptance probability may fall to around $10^{-6}$, implying millions of random trials per feasible sample. 
Such inefficiency severely degrades the optimization process.

\noindent\textbf{Topology-Aware Sampling.}
To avoid rejection, we generate configurations directly in a dependency-consistent order defined by the topology graph \cite{charpentier2022differentiabledagsampling, 9341040}.
We first identify root knobs (nodes with no incoming edges) and sample them using Latin Hypercube Sampling (LHS) to ensure good coverage of their domains \cite{10.5555/3488733.3488749, 10.1145/3318464.3380591, 10.5555/846219.847390, 10.1145/3448016.3457291}. 
We then traverse the graph in topological order. 
When sampling a knob, we dynamically restrict its feasible interval based on the already assigned values of its parent knobs. 
Thus, every sampled value automatically satisfies all ordering constraints.

This sequential, constraint-aware sampling eliminates rejection entirely: every generated configuration is feasible by construction. 
Moreover, sampling cost grows linearly with the number of knobs rather than being governed by the shrinking acceptance probability of the full Cartesian space.
%\noindent\textit{Example.}
Figure~\ref{fig:workflow_example} illustrates this process. 
Root nodes are sampled first. 
Subsequent knobs are sampled within dynamically restricted intervals determined by their parents, progressively pruning infeasible regions. 
After all knobs are assigned, inverse normalization restores values to their original DBMS units.
\section{Knob Constraint Extraction}
\label{sec:constraint_extraction}

\Sys{} requires reliable evidence of knob dependencies before they can be incorporated into the configuration search space. However, these dependencies are expressed in heterogeneous forms across DBMS documentation, and an incorrectly extracted relation may exclude valid configurations. We thus develop a precision-first pipeline that extracts structured dependencies from manuals while filtering unsupported and ambiguous relations.

\subsection{Extraction Scope and Inputs}
\label{sec:extraction_scope}

The extraction module takes three inputs: (1) a version-specific corpus of official DBMS documentation \cite{LanZD26}, (2) a target knob list, and (3) a predefined relation schema. The knob list can be obtained using existing knob selection techniques, such as SHAP~\cite{NIPS2017_8a20a862}, Lasso~\cite{10.1111/j.2517-6161.1996.tb02080.x}, LLM-assisted knob selection approaches~\cite{10.1145/3733620.3733641}, or manual selection. Knob selection itself is outside the scope of this paper. The relation schema defines the dependency types recognized by the extraction pipeline. This preparation is performed once per DBMS version.

Given these inputs, candidate retrieval, extraction, verification, normalization, and filtering are performed automatically. Thus, aside from providing the documentation corpus, relation schema, and target knob list, the extraction workflow is automated. Each extracted record contains \texttt{(knob1, relation, knob2, condition, context, evidence, confidence)}.

The current optimizer in \Sys{} enforces verified pairwise numerical ordering constraints, such as $A<B$ and $A\leq B$. The extraction pipeline also identifies activation conditions, resource bounds, multi-knob dependencies, and performance recommendations, but these relation types are retained rather than translated into topology constraints by the current optimizer.

Our design prioritizes precision because extraction errors have asymmetric downstream effects. A missed constraint leaves part of the search space unpruned, whereas a hallucinated or incorrectly oriented constraint may exclude valid and potentially high-performing configurations. Consequently, \Sys{} retains only evidence-supported relations and abstains when the documentation does not provide sufficient support. Its scope is intentionally limited to dependencies explicitly stated in the selected manual version; undocumented, workload-specific, or cross-version dependencies are outside the scope of the current system.

\subsection{Precision-First Extraction Pipeline}
\label{sec:constraint_workflow}

Figure~\ref{fig:catune_design} presents the extraction pipeline. It combines schema-constrained LLM extraction with evidence verification, canonical normalization, and deterministic rules for recurring documentation expressions.

\noindent\textbf{(1) Candidate retrieval.}
We parse the documentation into knob-centric entries and identify dependency-relevant paragraphs using knob co-mentions and trigger expressions, such as ``must be less than,'' ``at least,'' ``ignored unless,'' and ``has no effect''~\cite{postgresql13-resource-config,postgresql13-wal-config,mysql80-system-variables}. The trigger expressions are derived from the closed relation schema and fixed before extraction. They are used only to retrieve candidate passages and do not establish constraints by themselves. Each selected paragraph is expanded with its local context and divided into bounded-size chunks. This step reduces irrelevant input while preserving the context needed to interpret the relation.

\noindent\textbf{(2) Schema-constrained extraction.}
For each candidate chunk, the LLM receives the primary knob, the knobs mentioned in the local context, and a closed set of relation labels \cite{zhang-etal-2026-scaling}. It must return structured JSON records and provide a supporting evidence span for every extracted relation. If the chunk contains no explicit dependency, the model is instructed to return an empty result. Requiring source-grounded evidence makes each extraction traceable and supports deterministic verification~\cite{li2024simple,huang-etal-2024-learning,pan-etal-2025-taxonomy}.

\noindent\textbf{(3) Evidence verification.}
Each preliminary record passes through two verification stages. First, Self-Reflection may revise or remove unsupported records and correct relation, direction, or condition errors~\cite{ji-etal-2023-towards,shinn2023reflexionlanguageagentsverbal}. Second, a separate LLM-as-Judge call, using a verification-specific prompt and the same model backend by default, scores the revised record on three \emph{label-agnostic} axes: whether the snippet states the dependency, whether the operands are assigned to the correct roles, and whether any activation condition is faithful~\cite{10.5555/3666122.3668142}. 
A record is rejected if any axis fails.
The judge does not choose the ``most accurate'' relation label because documentation can express the same dependency in multiple ways. Instead, it verifies evidence support, operand direction, and conditions against the cited snippet. Records with uncertain support trigger one additional judge call and are retained only when both calls agree on the same canonical relation; otherwise, the pipeline abstains.

\noindent\textbf{(4) Normalization and semantic repair.}
We normalize knob names, relation labels, activation conditions, and relation directions. Semantically equivalent descriptions are mapped to the same representation. For example, both ``$A$ must not exceed $B$'' and ``$B$ must be at least $A$'' are normalized to the canonical relation $A\leq B$. This canonicalization prevents semantically equivalent statements from producing opposite edge orientations in the topology graph.

\noindent\textbf{(5) Deterministic rules.}
Generic comparison templates handle recurring documentation expressions, such as mapping ``must be less than $X$'' to an ordering relation. This rule branch adds evidence for explicit patterns, while each extracted record still undergoes evidence verification and canonical normalization \cite{zhang2019invest,ZhangHVD18}.

\noindent\textbf{(6) Scoring and post-processing.}
A record is retained when it passes the judge's support, direction, and condition checks (Stage~3). Reflection consistency, judge support, and rule agreement are combined into a reliability score used to rank duplicate or conflicting records. We then deduplicate identical tuples and resolve conflicting relations for the same directed knob pair by retaining the best-supported record. The retained constraints are then passed to topology construction, where inconsistent directed cycles are detected as described in Algorithm~\ref{alg:build-topology}.

\begin{example}
Consider the documentation statement that a value of \texttt{max\_parallel\_workers} higher than \texttt{max\_worker\_processes} has no effect~\cite{postgresql13-resource-config}. For readability, let $A$ and $B$ denote these two knobs, respectively. Candidate retrieval selects this statement based on the knob co-mention and the phrase ``higher than.'' The LLM branch extracts the relation and its evidence span, which normalization canonicalizes to $A \leq B$.
The verification stage then confirms that the snippet states the dependency, that $A$ and $B$ occupy the correct roles, and that no spurious condition was added, so the record is retained. A record that failed any of these checks (e.g., one asserting the reverse ordering) would be rejected.
\end{example}

\begin{figure*}[t]
    \centering
    \begin{minipage}[t]{0.48\linewidth}
        \centering
        \includegraphics[width=\linewidth]{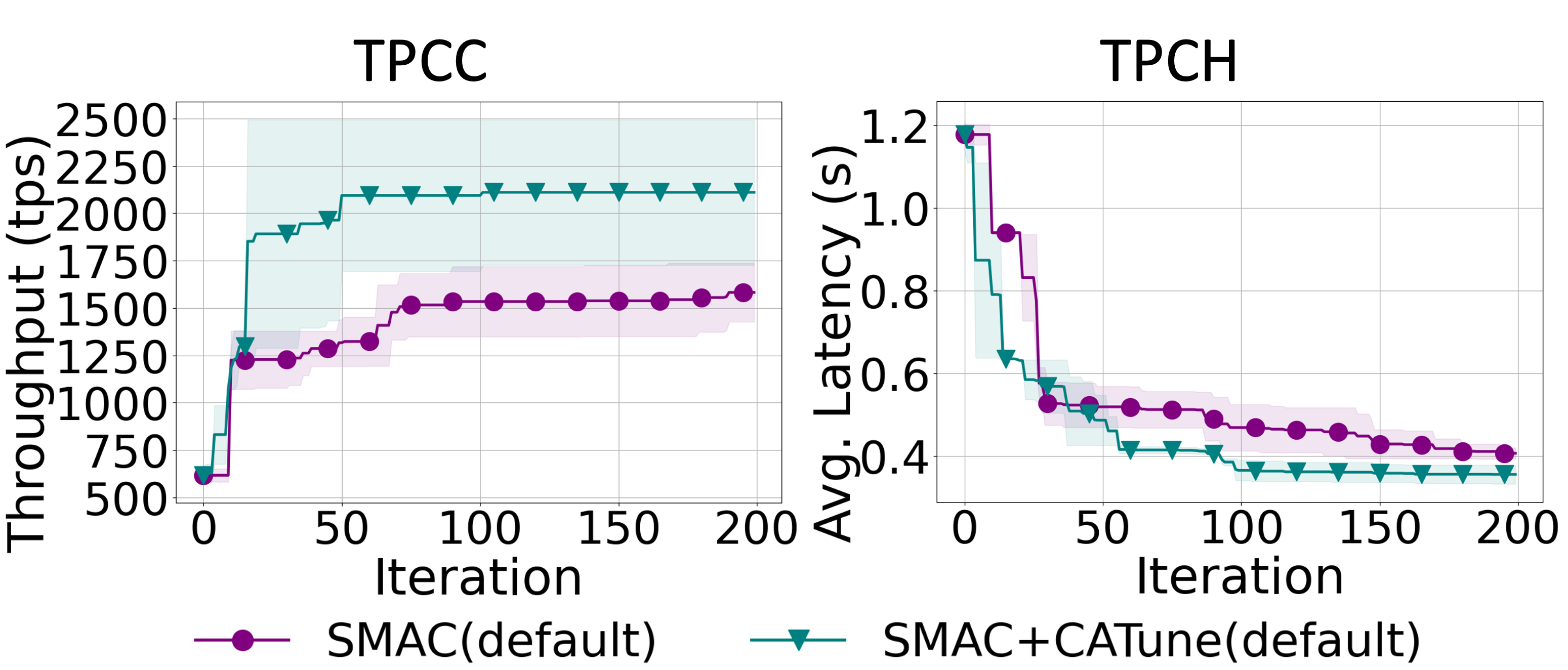}
        % \vspace{-2em}
        \caption{Best performance achieved by \Sys{} under manual default ranges on PostgreSQL.}
        \label{fig:efficiency_performance_default}
    \end{minipage}
    \hfill
    \begin{minipage}[t]{0.48\linewidth}
        \centering
        \includegraphics[width=\linewidth]{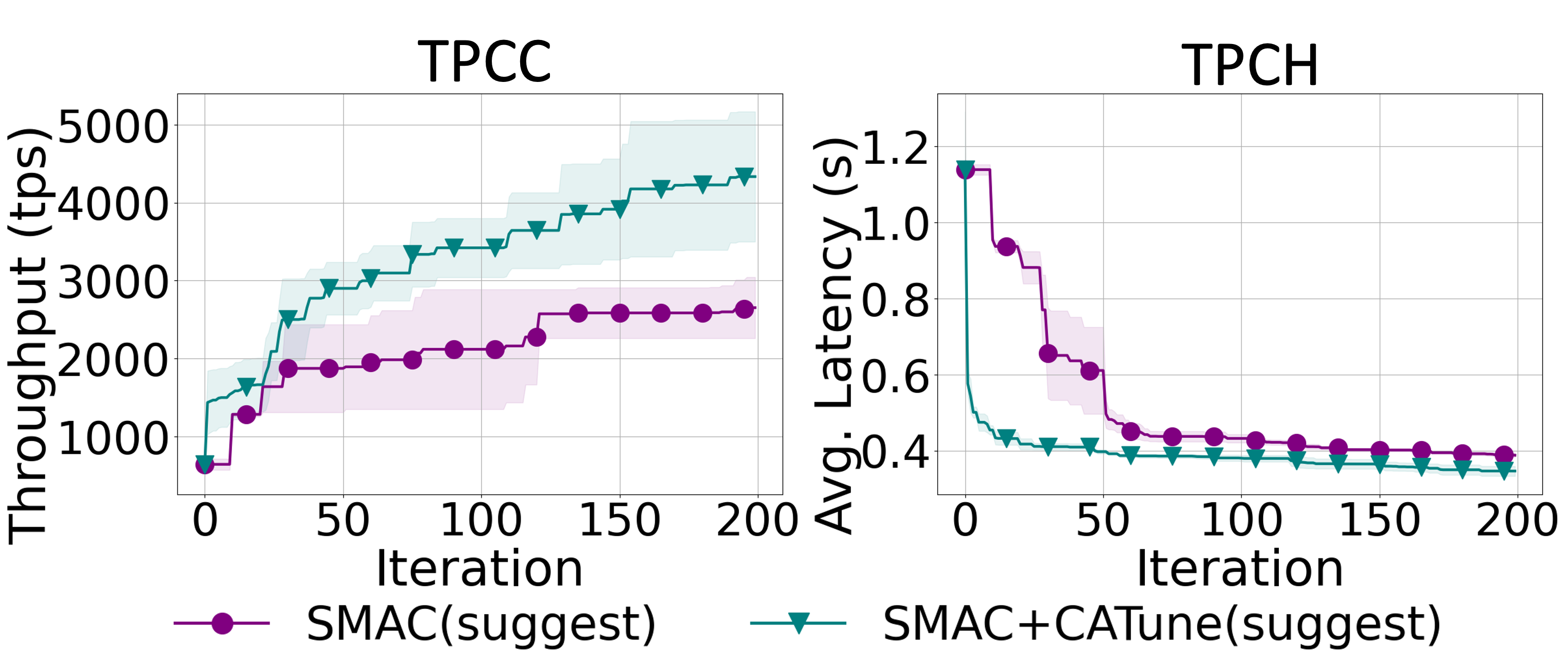}
        % \vspace{-2em}
        \caption{Best performance achieved by \Sys{} under knowledge-guided reduced ranges on PostgreSQL.}
        \label{fig:efficiency_performance_suggest}
    \end{minipage}
    % \vspace{-2em}
    
\end{figure*}

\subsection{Comparison with Manual-reading Systems}
\label{sec:extraction_comparison}

Our extraction objective differs from those of existing manual-reading tuning systems. Prior systems extract textual hints that recommend values for individual knobs and adapt or aggregate these recommendations using runtime feedback, or build per-knob structured knowledge such as suggested, minimum, maximum, and special values~\cite{10.1145/3514221.3517843,10.14778/3659437.3659449}. Such knowledge is primarily used for knob selection, range refinement, and tuning initialization.

In contrast, \Sys{} extracts \emph{inter-knob dependencies} rather than recommended values for individual knobs. Each extracted record explicitly represents the participating knob pair, relation direction, optional activation condition, supporting evidence, and confidence score. Supported pairwise ordering records are then translated into structural constraints on the configuration space.
Due to space limitations, we provide the detailed characterization of documented dependency types and the extraction quality evaluation in our technical report~\cite{CATuneTechReport}. The report also includes comparisons against alternative extraction strategies.

\section{Evaluation}
\label{sec:evaluation}

% We evaluate \Sys{} on two representative workloads to measure both tuning efficiency and final performance. Across different search spaces, \Sys{} reduces time-to-optimal by $2.27\times$ to $12.5\times$ compared to SMAC, while achieving performance gains ranging from 10.37\% to 63.37\%. The improvements remain consistent across different BO frameworks, indic;tating that structural constraint encoding is optimizer-agnostic. In particular, topology-aware sampling significantly accelerates convergence, achieving up to $12.5\times$ speedup over rejection-based strategy.

This section evaluates the impact of ordering constraints on tuning efficiency and  performance. We examine constraint-aware search across workloads, search-space regimes, and BO frameworks, and compare topology-aware sampling with rejection-based strategies.

\subsection{Experimental Setup}
\label{sec:exp_setup}

\textbf{Workloads.}
We evaluate our approach on two standard benchmarks covering analytical and transactional regimes: 
TPC-H (OLAP workload) with scale factor 1, and TPC-C (OLTP workload) with scale factor 200. 
TPC-C is configured with ten terminals and unlimited arrival rate to stress concurrency behavior. 
All benchmark implementations are from BenchBase \cite{10.14778/2732240.2732246}. 
These workloads are widely adopted in DBMS tuning studies \cite{10.14778/3659437.3659449}.

\noindent\textbf{Hardware.}
All experiments are conducted on a dedicated machine equipped with a 12-core 12th Gen Intel(R) Core(TM) i7-12700K CPU, 32\,GB RAM, and a 1\,TB Samsung SSD.

\noindent\textbf{Tuning Settings.}
We conduct all experiments using PostgreSQL v13 and MySQL v8.0. 
% We assume a candidate set of tunable knobs is provided, as in prior ML-based tuning systems that address knob selection separately 
% (e.g., GPTuner and DB-BERT) \cite{10.14778/3659437.3659449, 10.14778/3551793.3551844}. 
We assume a candidate set of tunable knobs is provided, since prior ML-based tuning systems treat knob selection as an independent preprocessing step (e.g., GPTuner and OpAdviser) \cite{10.14778/3659437.3659449, 10.14778/3632093.3632114}.
Our work focuses on discovering and enforcing structural dependencies within a given knob set, rather than on knob selection itself.
In our experiments,  we consider 45 and 50 performance-relevant knobs for PostgreSQL and MySQL, respectively,  spanning memory allocation, parallelism, and write-ahead logging behavior. 
We exclude parameters related to debugging, logging verbosity, security, and file path configuration, as these do not directly influence performance. 
From this knob set, we extract 10 documented ordering dependencies and enforce them in the constraint-aware setting. 
Each workload is tuned for 200 iterations. 
The first 10 configurations are generated via LHS (baseline) or topology-aware sampling (our method) as the initial design. 
We perform four independent runs with different random seeds to account for optimizer stochasticity and report the mean best performance across runs.
We optimize for overall system throughput (higher is better) for TPC-C and average latency (lower is better) for TPC-H, following common practice in prior tuning work \cite{10.14778/3659437.3659449}.

\noindent\textbf{Optimizers.}
We evaluate our method using two widely adopted Bayesian optimization frameworks: SMAC and Gaussian Process BO (GP-BO). 
Both are implemented within the SMAC3 infrastructure \cite{10.5555/3586589.3586643}, ensuring a unified experimental environment. 
For SMAC, we use its default Random Forest surrogate model with expected improvement. 
For GP-BO, we replace the surrogate with a Gaussian Process while retaining the same acquisition optimization pipeline. 
This unified setup isolates the effect of surrogate modeling from implementation-level artifacts.

To assess compatibility with knowledge-enhanced tuning, we additionally evaluate integration with GPTuner, a coarse-to-fine BO framework built on SMAC \cite{10.14778/3659437.3659449, 10.5555/3586589.3586643}.
Due to space limitations, the main paper presents detailed results on PostgreSQL; we conduct the same set of experiments on MySQL v8.0 and observe consistent improvements. Please refer to our technical report \cite{CATuneTechReport} for the complete MySQL results.

\subsection{Efficiency Evaluation}
\label{sec:efficiency_evaluation}

\noindent\textbf{Evaluation Dimensions.}
We evaluate \Sys{} along two dimensions:
(i) \textbf{final tuning performance} (best throughput or latency after 200 iterations), and 
(ii) \textbf{sample efficiency}, measured as the number of iterations required to reach the best observed performance (time-to-optimal).
Unless otherwise stated, SMAC with a random forest surrogate serves as the primary baseline.

\noindent\textbf{Scenarios.}
To isolate the effect of constraint-aware sampling from other forms of tuning knowledge, we consider four settings:

\begin{itemize}[nosep, left=2pt]
\item \textit{SMAC (default):} SMAC operating over the default knob ranges specified in the PostgreSQL manuals.

\item \textit{SMAC+\Sys{} (default):} SMAC augmented with constraint-aware tuning under the same default ranges.

\item \textit{SMAC (suggest):} SMAC using knowledge-guided suggested ranges derived from GPTuner’s knowledge handler~\cite{10.14778/3659437.3659449}.

\item \textit{SMAC+\Sys{} (suggest):} Suggested ranges combined with constraint-aware tuning.
\end{itemize}

This allows us to evaluate (i) the standalone impact of ordering constraints and 
(ii) their interaction with range-based knowledge.

\begin{figure*}[!t]
    \centering
    % \vspace{-2em}
    \begin{minipage}[t]{0.24\linewidth}
        \centering
        \includegraphics[width=\linewidth]{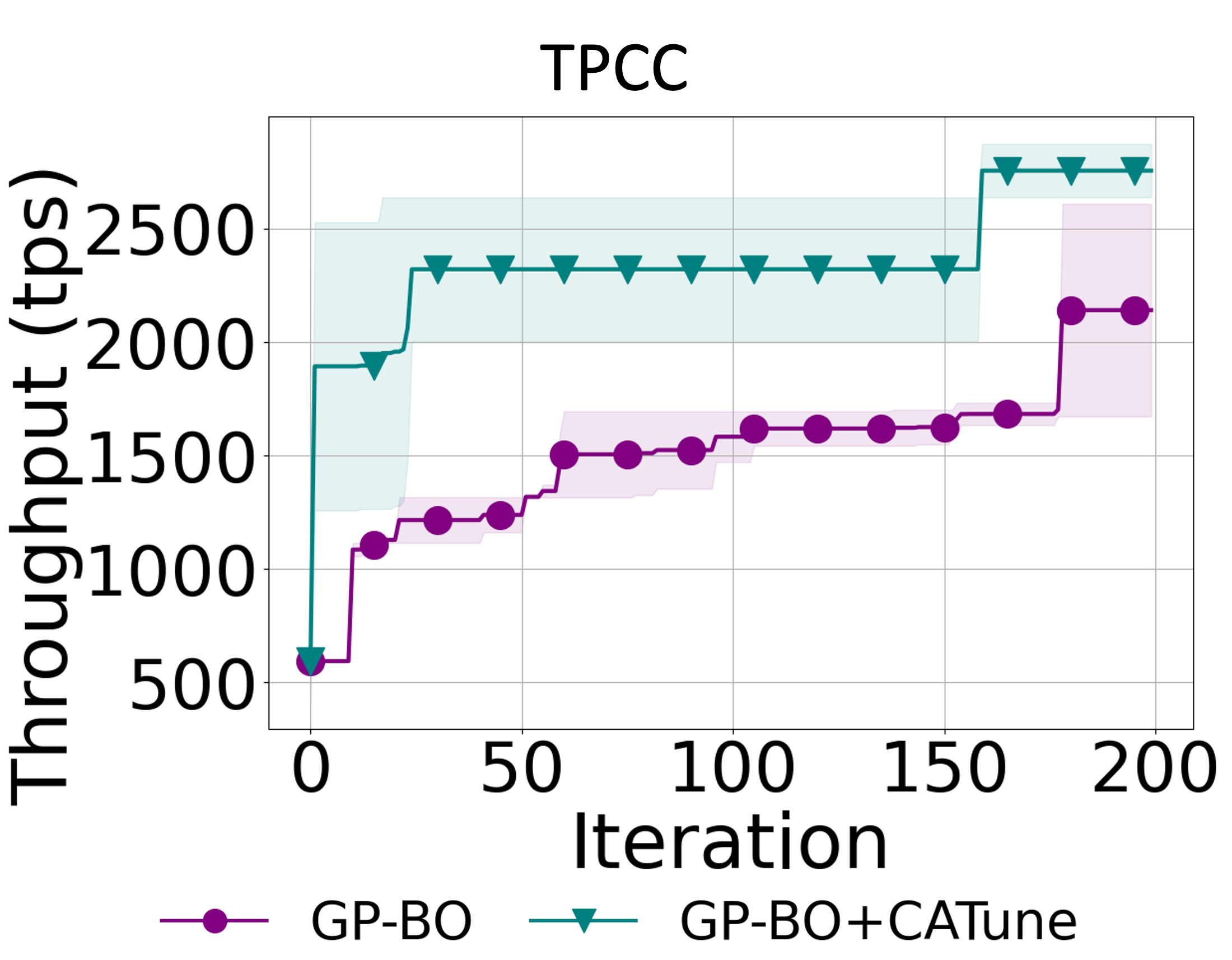}
        % \vspace{-2em}
        \caption{Performance of \Sys{} integrated with GP-BO under knowledge-guided suggested ranges.}
        \label{fig:GPBO_CATune_suggest}
    \end{minipage}
    \hfill
    \begin{minipage}[t]{0.24\linewidth}
        \centering
        \includegraphics[width=1\linewidth]{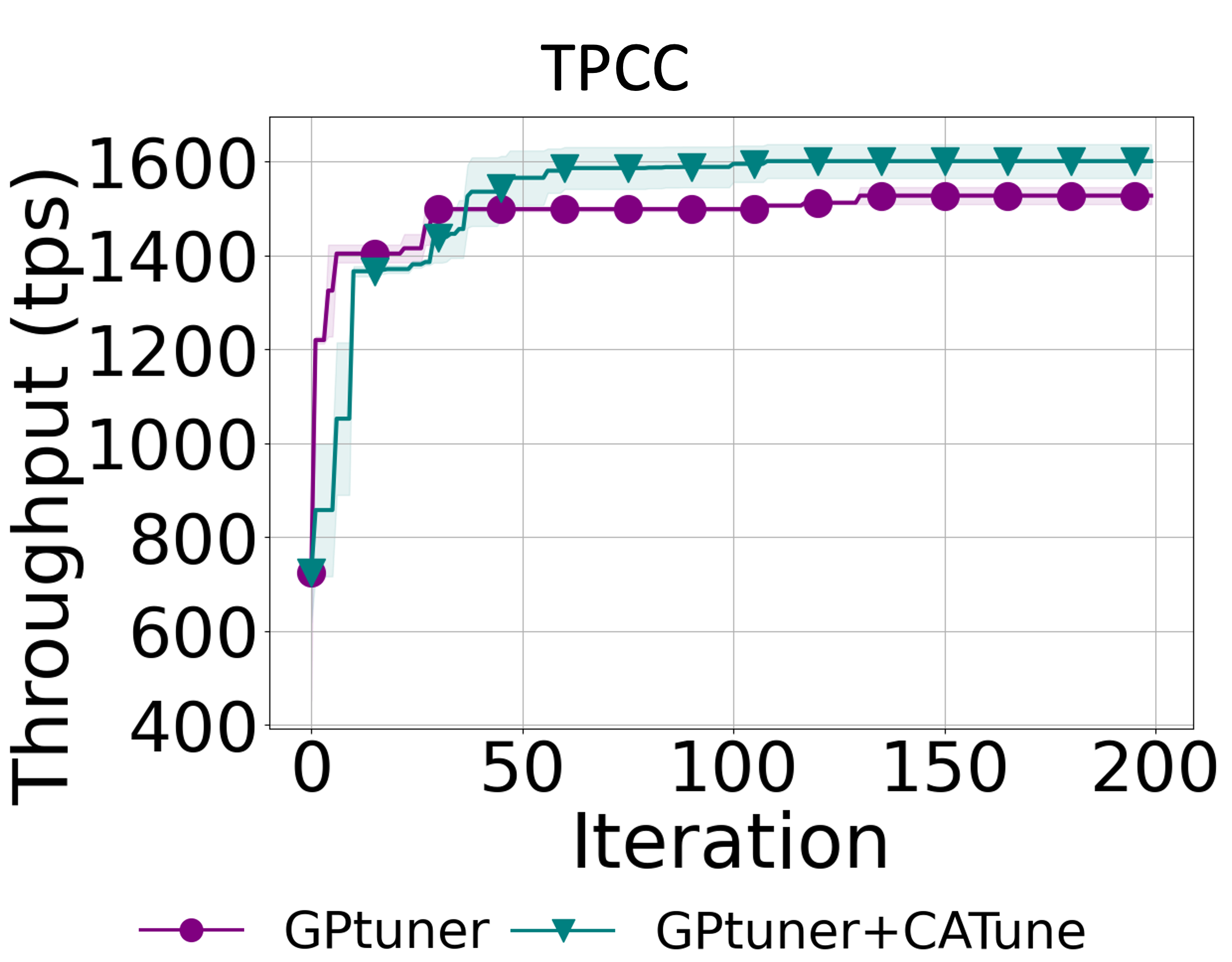}
        % \vspace{-2em}
        \caption{Performance impact of \Sys{} when integrated with coarse-to-fine BO framework from GPTuner.}
        \label{fig:GPTuner_CATune}
    \end{minipage}
    \hfill
    \begin{minipage}[t]{0.22\linewidth}
        \centering
        \includegraphics[width=\linewidth]{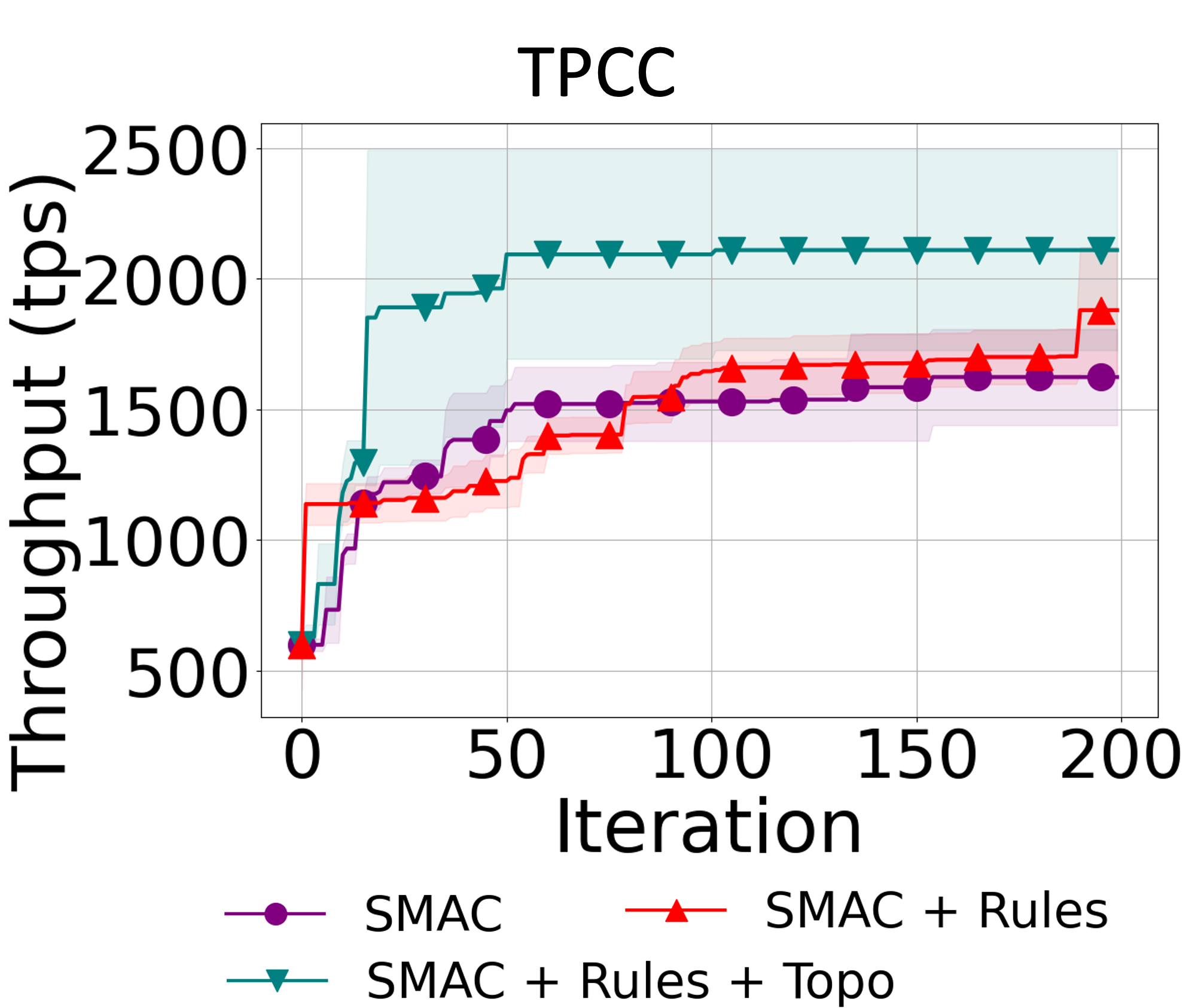}
        % \vspace{-2em}
        \caption{The result of ablation study for \Sys{}, presenting the contribution of topology-aware sampling.}
        \label{fig:rejection_sampling}
    \end{minipage}
    \hfill
    \begin{minipage}[t]{0.22\linewidth}
        \centering
        \includegraphics[width=1\linewidth]{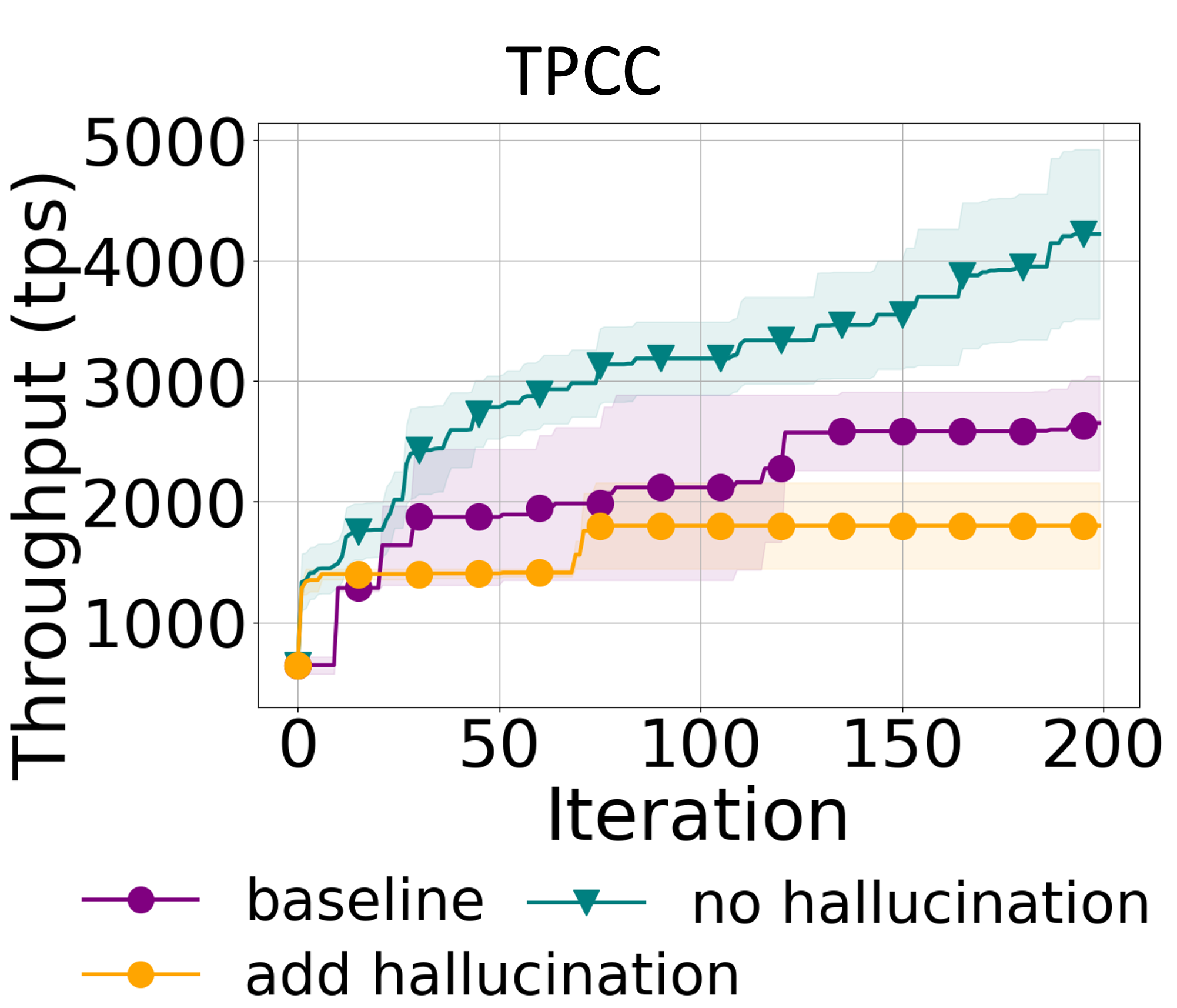}
        % \vspace{-2em}
        \caption{Performance impact between before and after adding constraints that reduce efficiency.}
        \label{fig:hallucination}
    \end{minipage}
    % \vspace{-1em}
\end{figure*}

\noindent\textbf{Results: Default Search Ranges.}
Figure~\ref{fig:efficiency_performance_default} reports results under vendor-provided default ranges. 
\Sys{} significantly improves sample efficiency, reaching the baseline SMAC optimum (after 200 iterations) $12.5\times$ faster on TPC-C and $2.27\times$ faster on TPC-H. 
It also yields superior final configurations: after 200 iterations, throughput improves by 33.39\% on TPC-C and latency decreases by 12.55\% on TPC-H relative to baseline SMAC.

\noindent\textbf{Results: Suggested Search Ranges.}
Figure~\ref{fig:efficiency_performance_suggest} shows results under knowledge-guided reduced ranges on PostgreSQL; the corresponding MySQL results are provided in the technical report \cite{CATuneTechReport}.
The gains remain consistent. \Sys{} reaches the baseline SMAC optimum $5.41\times$ faster on TPC-C and $3.57\times$ faster on TPC-H. 
In terms of final performance, it improves throughput by 63.37\% on TPC-C and reduces latency by 10.73\% on TPC-H compared to the baseline SMAC. On MySQL, \Sys{} achieves an 11.35\% higher throughput than baseline SMAC on TPC-C while reaching the baseline optimum $28.5\times$ faster \cite{CATuneTechReport}.

\noindent\textbf{Discussion.}
Comparing default and knowledge-guided ranges disentangles the effect of ordering constraints from simple range narrowing. 
Default ranges represent a cold-start regime with large infeasible or low-quality regions, whereas reduced ranges correspond to a warm-start regime where prior knowledge already prunes part of the domain. 
The consistent improvements in both settings indicate that constraint-aware tuning provides benefits beyond range restriction alone. 
Overall, \Sys{} improves both convergence speed and final tuning quality across deployment scenarios.

\subsection{Generalizing across Optimizers} 
\label{sec:general_optimizer_study}

\noindent\textbf{Scenarios.}
To demonstrate that \Sys{} is not tied to a specific optimizer or BO implementation, we evaluate it with different surrogate models and optimization frameworks under TPC-C workload. In addition to SMAC (random forest surrogate), we consider Gaussian Process-based BO (GP-BO) and the coarse-to-fine BO framework used in GPTuner. We follow the same experimental protocol as in Section~\ref{sec:efficiency_evaluation}, optimizing for final performance and measuring sample efficiency.
The evaluated scenarios are:

\begin{itemize}[nosep, left=2pt]
% \item \textbf{GPBO (default):} GP-based BO with manual default knob ranges.
% \item \textbf{GPBO+\Sys{} (default):} GP-based BO with constraint-aware \Sys{} under default ranges.
\item \textit{GPBO:} GP-based BO with knowledge-guided suggested ranges.
\item \textit{GPBO+\Sys{}:} GP-based BO with suggested ranges combined with \Sys{}.
\item \textit{GPTuner:} Coarse-to-fine BO with knowledge-guided ranges.
\item \textit{GPTuner+\Sys{}:} GPTuner augmented with \Sys{}.
\end{itemize}

\noindent\textbf{Results: GP-BO.}
% Figure~\ref{fig:GPBO_CATune_default} reports results under the manual default ranges. {\color{red}\Sys{} substantially improves sample efficiency: it reaches the baseline GP-BO optimum (measured after 200 iterations) $n\times$ faster on TPC-C and $1.63\times$ faster on TPC-H. In addition, \Sys{} improves final tuning quality. After 200 iterations, it achieves a n\% throughput improvement on TPC-C and a 9.34\% latency reduction on TPC-H relative to baseline GP-BO. }
% Second, \Sys{} also improves final tuning quality. After 200 iterations, \Sys{} achieves an average throughput improvement of n\% on TPC-C and a latency improvement of n\% on TPC-H over baseline GP-BO.
Figure~\ref{fig:GPBO_CATune_suggest} shows results under knowledge-guided reduced ranges. The improvements remain consistent. \Sys{} reaches the baseline GP-BO optimum $8.33\times$ faster on TPC-C. In terms of final performance, \Sys{} improves throughput by 28.71\% on TPC-C compared to baseline GP-BO.

\noindent\textbf{Results: GPTuner.}
We next evaluate \Sys{} when coupled with GPTuner’s coarse-to-fine BO framework. The same trend holds: \Sys{} accelerates convergence and improves final performance. \Sys{} reaches the baseline GPTuner optimum $5.26\times$ faster and improves throughput by 4.81\% on TPC-C as shown in Figure \ref{fig:GPTuner_CATune}. 

\noindent\textbf{Summary.}
Across different surrogate models (random forest vs. Gaussian process) and BO frameworks (standard GP-BO vs. coarse-to-fine GPTuner), \Sys{} consistently improves both sample efficiency and final performance. These results indicate that constraint-aware search-space restriction is orthogonal to the underlying optimizer and generalizes across diverse BO implementations.

\subsection{Scalability Study}
\label{sec:scalability_study}
We conduct a scalability study to examine whether the benefits of \Sys{} persist as the database size increases. Due to space limitations, we omit the study from the main paper and refer readers to the technical report~\cite{CATuneTechReport} for detailed results and analysis.

\subsection{Ablation Study}
\label{sec:ablation_study_topo}

\noindent\textbf{Scenarios.}
To isolate the contribution of topology-aware sampling in \Sys{}, we compare the following settings:

\begin{itemize}[nosep,left=2pt]
\item \textit{SMAC (No rules, baseline):} The original SMAC optimizer operating over the unconstrained configuration space without explicit constraint handling.

\item \textit{SMAC + Rules (Rejection):} A hard-constraint enforcement baseline. Configurations that violate extracted dependency constraints are rejected and resampled until a valid configuration is obtained. This represents a conventional post-hoc constraint-handling strategy.

\item \textit{SMAC + Rules + Topo (\Sys{}):} The full \Sys{} design. Topology-aware sampling uses the dependency graph to generate configurations that satisfy ordering constraints during sampling, thereby avoiding infeasible regions by construction.
\end{itemize}

\noindent\textbf{Results.}
Figure~\ref{fig:rejection_sampling} shows that rule-based constraint enforcement improves over unconstrained SMAC by avoiding evaluations of invalid configurations. Topology-aware sampling further improves both convergence and final performance by generating feasible configurations directly instead of repeatedly rejecting infeasible ones. Compared with SMAC + Rules, the full system achieves 18.8\% higher throughput and reaches the same performance level 2.19$\times$ faster, demonstrating the benefit of incorporating structural dependencies during sampling.

\noindent\textbf{Summary.}
These results show that hard constraint enforcement alone is insufficient. While rule-based validation improves tuning quality, topology-aware sampling further increases sample efficiency by generating feasible configurations by construction, making it a key component of \Sys{}.

\section{Discussion}

% \subsection{Operational Impact of Constraints}
% As Section \ref{sec:reliability_risk} elaborated, the violations of different ordering constraints have different impacts: DBMS startup failures or reducing efficiency. Here, we study the impact of these two types of constraints. 
% \paragraph{Impact of DBMS startp failures} We use the ordering constraint, \texttt{superuser\_reserved\_connections} $<$ \texttt{max\_connections}, as an example to study. We conduct experiments with 10 knobs including \texttt{superuser\_reserved\_connections} and \texttt{max\_connections} with LLM-suggested ranges. Throughout the 200 iterations, there are 9 configurations that failed because of DBMS startup failure, in which, all failure configurations violate the ordering constraint between \texttt{superuser\_reserved\_connections} and \texttt{max\_connections}.

% \paragraph{Impact of constraints that reduce efficiency} Among the constraint sets used in Section \ref{sec:evaluation}, we add two additional fake constraints that unnecessarily reduce search spaces. Figure \ref{fig:study_constraint_quality} show a improvement downgradement 55.75\%.

This section analyzes how ordering constraints influence system stability and high-performing configurations, and we compare proactive filtering against penalty-based constraint handling.

\subsection{Operational Impact of Constraints}
\label{sec:operational_impact}

Ordering constraints influence tuning in two distinct ways. 
First, they prevent configurations that would cause system failures. 
Second, they reshape the feasible region in ways that may either improve or hinder optimization. 
We examine both effects empirically.

We begin with correctness-critical constraints whose violation prevents DBMS startup. 
As a representative example, PostgreSQL requires 
\texttt{superuser\_reserved\_connections} $<$ \texttt{max\_connections}. 
We conduct tuning over 10 knobs, including these parameters, using LLM-suggested ranges. 
Across 200 iterations, 9 sampled configurations failed to start the DBMS, and all failures corresponded to violations of this ordering constraint. 
This result demonstrates that correctness-critical ordering constraints are essential for reliability: without proactive enforcement, the optimizer wastes evaluations on configurations that cannot even initialize the system.
We next examine the opposite scenario—constraints that unnecessarily restrict the search space. 
To study this effect, we inject two hallucinated ordering constraints produced by a naive LLM-based extraction method. 
We compare three settings: baseline SMAC with suggested ranges, \Sys{} with valid constraints, and \Sys{} augmented with the two hallucinated constraints. 
Figure~\ref{fig:hallucination} shows that hallucinated constraints severely degrade tuning performance. 
Without hallucinations, \Sys{} reaches the baseline optimum $4.55\times$ faster and achieves a 59.15\% improvement in final performance. 
When the two incorrect constraints are injected, the achievable performance improvement drops by 32.07\% relative to the baseline. 
Compared to the hallucinated setting, \Sys{} without hallucinations achieves 134.29\% higher final performance and converges $9.09\times$ faster.

These findings highlight a fundamental trade-off. 
Injecting semantically valid constraints improves reliability and sample efficiency by removing invalid regions while preserving high-performing configurations. 
In contrast, hallucinated or overly restrictive constraints artificially truncate the feasible domain and may exclude near-optimal configurations. 
This underscores the importance of high-precision constraint extraction and validation when integrating automatically generated constraints into constraint-aware tuning systems.

\subsection{Efficiency of Different Sampling Strategies}
\label{sec:efficiency_topo_sampling}

To isolate the sampling efficiency of topology-aware sampling, we compare it with three alternative constraint-handling strategies on PostgreSQL under the default knob ranges and tuning settings described in Section~\ref{sec:exp_setup}:
\begin{itemize}[nosep, left=2pt]
\item \textit{No Rules}: configurations are sampled from the original search space without constraint handling.
\item \textit{Reject}: configurations are repeatedly sampled from the original search space until they satisfy all constraints.
\item \textit{Repair}: configurations are first sampled from the original search space and then modified to satisfy violated constraints.
\item \textit{Topology-aware Sampling} (Topo): constraints are incorporated before sampling, so generated configurations are feasible by construction.
\end{itemize}

\begin{table}[t]
\centering
% \caption{Sampling efficiency under PostgreSQL in tuning settings with default knob ranges over 100 requested configurations (Returned). Generated indicates the actual number of generated configurations. Invalid means the number of configurations that violate at least one rule. Time (s) is the timing of returning 100 valid configurations. *indicates the number of repaired configurations}
\caption{Sampling efficiency for PostgreSQL tuning over 100 requested samples (Returned). Generated is the number of configurations generated, Invalid the number violating $\geq1$ rule, and Time (s) the time to obtain 100 valid configurations. $^\ast$ denotes repaired configurations.}
% \vspace{-1em}
\label{tab:sampling-efficiency}
\begin{tabular}{lrrrr}
\toprule
\textbf{Method} & \textbf{Returned} & \textbf{Generated} & \textbf{Invalid} & \textbf{Time (s)} \\
\midrule
No Rules & 100 & 100 & 99 & 0.026 \\ 
Reject & 100 & 15,992 & 15,892 & 6.841 \\
Repair & 100 & 100 & 99* & 0.031 \\
Topo & 100 & 100 & 0 & 0.089 \\
\bottomrule
\end{tabular}
% \vspace{-2em}
\end{table}

\noindent\textbf{Results.}
Table~\ref{tab:sampling-efficiency} shows that unconstrained sampling is highly inefficient: 99 out of 100 configurations violate at least one constraint. Rejection sampling avoids invalid outputs, but it must generate 15,992 candidates to obtain 100 feasible ones, rejecting 15,892 samples and taking 6.841 seconds. Repair also starts from the unconstrained space; since 99 out of 100 samples require repair, it introduces an extra post-processing step before configurations can be used. In contrast, topology-aware sampling generates feasible configurations directly, returning 100 valid samples from exactly 100 generated candidates without rejection or repair. This avoids both wasted trials and post-hoc modification, making topology-aware sampling more efficient for constrained tuning.

\subsection{Impact of Different Constraint Employment Strategies}
\label{sec:different_strategies}

\Sys{} enforces ordering constraints by proactively restricting exploration and exploitation to the feasible region. 
An alternative approach is to retain the full search space and penalize constraint violations during optimization. 
We compare these two strategies:

\begin{itemize}[nosep, left=2pt]
\item \textit{Proactive Filtering (\Sys):} Infeasible regions are removed from the search space before sampling.
\item \textit{Passive Penalization (MultiObj):} The optimizer searches the full domain while minimizing performance loss and constraint violations simultaneously.
\end{itemize}

For the passive strategy, we formulate tuning as a multi-objective problem. 
For an ordering constraint $\theta_i \le \theta_j$, the violation penalty is defined as $\mathcal{P}_{ij}(\theta)
=
\frac{\max(0,\, \theta_i - \theta_j)}
{|\theta_j| + 1}$.
The total penalty across all ordering constraints is $\mathcal{P}(\theta)
=
\sum_{(i,op,j)\in \mathcal{C}}
\mathcal{P}_{ij}(\theta)$,
% \begin{equation}
% \mathcal{P}(\theta)
% =
% \sum_{(i,op,j)\in \mathcal{C}}
% \frac{\max(0,\, \theta_i - \theta_j)}
% {|\theta_j| + 1},
% \end{equation}
where $op \in \{<,\le\}$. The denominator normalizes violations to reduce scale sensitivity across knobs with heterogeneous ranges.
We implement this strategy using ParEGo~\cite{1583627} within the SMAC framework.

\begin{example}
Consider an ordering constraint \texttt{superuser \_reserved\_connections} $\le$ \texttt{max\_connections} with 
\texttt{superuser\_ reserved\_connections} $= 120$ and \texttt{max\_connections} $= 100$. 
The violation magnitude is $120 - 100 = 20$, and the normalized penalty becomes
$
\mathcal{P}_{ij}(\theta)
=
\frac{20}{|100| + 1}
\approx 0.198.$
If \texttt{superuser\_reserved\_connections}$=4$, \texttt{max\_connections}$=6$, the penalty is zero.
\end{example}

\noindent\textbf{Results.}
Figure~\ref{fig:study_different_strategies} shows the convergence curves on TPC-C and TPC-H. On average, \Sys{} reaches the baseline optimum approximately $\sim 3.93\times$ and $3.57\times$ faster than the baseline on TPC-C and TPC-H, respectively, while MultiObj achieves only $2.3\times$ and $1.74\times$ speedups. This indicates that proactively restricting the search space leads to substantially faster convergence.
After 200 iterations, \Sys{} improves average throughput on TPC-C by 53.52\% over the baseline, compared to 38.26\% achieved by MultiObj. On TPC-H, \Sys{} reduces latency by 10.73\%, whereas MultiObj achieves a 3.43\% reduction. Overall, \Sys{} consistently delivers larger performance improvements across both workloads.

\noindent\textbf{Discussion.}
The weaker performance of passive penalization arises from its need to explore infeasible configurations within the full search space. 
Although violations are penalized, evaluations are still consumed by invalid or low-quality regions. 
In contrast, proactive filtering removes infeasible regions entirely, concentrating the optimization process on semantically valid configurations.

These results demonstrate that structural constraint injection is more sample-efficient and effective than penalty-based handling.

\begin{figure}
    \centering
    \includegraphics[width=\linewidth]{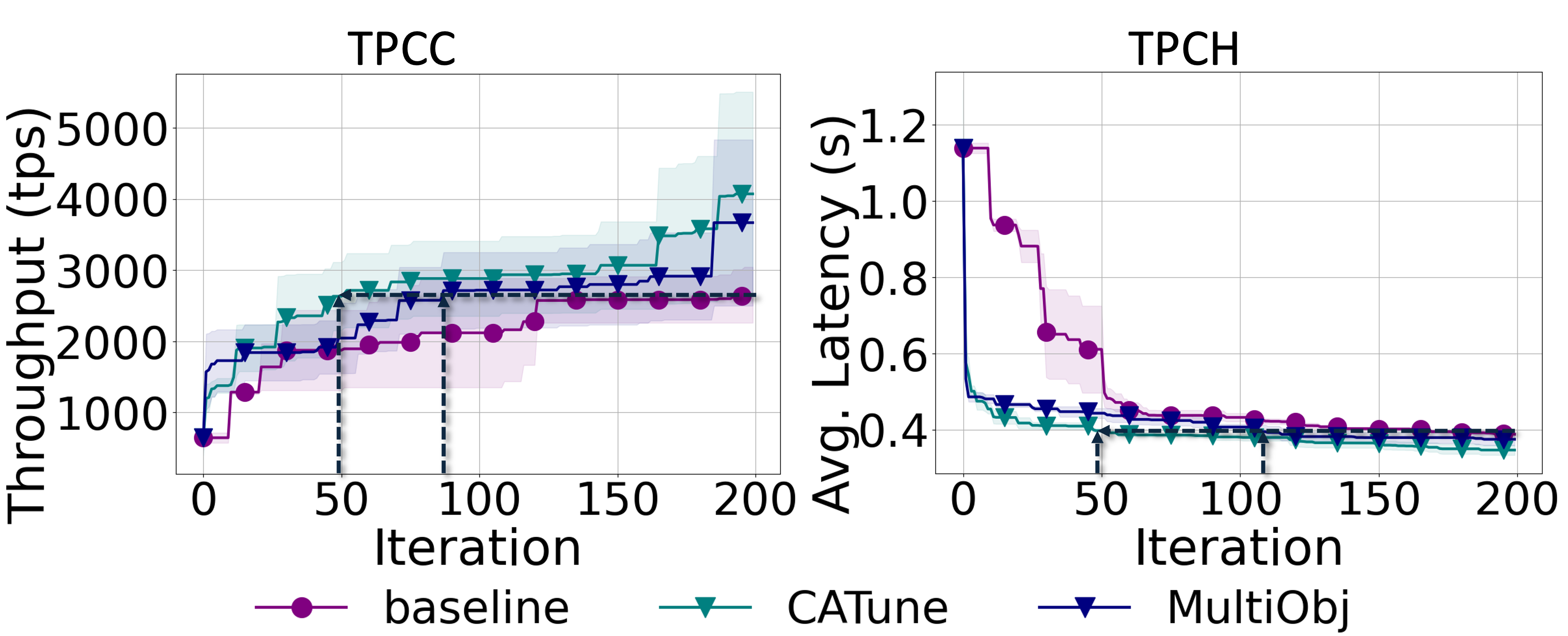}
    % \vspace{-2em}
    \caption{Effect of different employment strategies: proactive filtering (\Sys) vs. passive penalization (MultiObj).}
    \label{fig:study_different_strategies}
    % \vspace{-2.5em}
\end{figure}

% \begin{figure}[]
%     \centering
%     \begin{minipage}[t]{0.48\linewidth}
%         \centering
%         \includegraphics[width=\linewidth]{figures/ablation.png}
%         \vspace{-2em}
%         \caption{The result of ablation study for \Sys{}, presenting the contribution of topology-aware sampling.}
%         \label{fig:rejection_sampling}
%     \end{minipage}
%     \hfill
%     \begin{minipage}[t]{0.48\linewidth}
%         \centering
%         \includegraphics[width=1\linewidth]{figures/hallucination.png}
%         \vspace{-2em}
%         \caption{Performance impact between before and after adding constraints that reduce efficiency.}
%         \label{fig:hallucination}
%     \end{minipage}
%     \vspace{-1.5em}
% \end{figure}

\section{Related Work}
This section reviews prior work on database configuration tuning and constrained optimization.

\noindent\textbf{Database Configuration Tuning.}
Automated database configuration tuning has been studied extensively and can be broadly categorized into heuristic-based, Bayesian Optimization (BO)-based \cite{NIPS2012_05311655}, and Reinforcement Learning (RL)-based approaches \cite{10.5555/3504035.3504427}.
Rule-based methods rely on manually crafted tuning rules to guide exploration \cite{10.5555/1287369.1287454, Kwan2002AutomaticCF}. 
Search-based methods employ heuristic search strategies (e.g., avoiding revisits and exploring local neighborhoods) to iteratively refine configurations \cite{10.1145/2628071.2628092, 10.1145/3127479.3128605}. 
These approaches become less effective as the configuration space grows in dimensionality and exhibits complex inter-knob dependencies.
BO-based methods \cite{10.14778/3457390.3457404, 10.14778/1687627.1687767, 10.1145/3318464.3380591, 10.1145/3035918.3064029, 10.1145/3448016.3457291, 10.1145/3514221.3526176} treat tuning as a black-box optimization problem. 
They iteratively fit a probabilistic surrogate model that maps knob configurations to performance and select new configurations by maximizing an acquisition function. 
These methods improve sample efficiency but typically assume box-constrained search spaces and do not explicitly model structural dependencies among knobs.
RL-based approaches explore the configuration space through sequential decision-making, balancing exploration and exploitation via agent--environment interaction \cite{10.1145/3514221.3517882, 10.14778/3352063.3352129, 10.1145/3464389, 10.14778/3484224.3484236, 10.1145/3299869.3300085}. 
While effective in dynamic settings, RL methods similarly operate over nominal configuration domains without enforcing structural validity constraints by construction.

\noindent\textbf{Constrained Bayesian Optimization.}
Constrained Bayesian Optimization (CBO) extends BO to handle unknown or partially observed constraints \cite{Schonlau1998, 10.5555/3020751.3020778, 10.5555/3044805.3044997, 10.1007/s11222-014-9477-x}. 
In classical CBO, constraints are modeled as black-box functions with probabilistic surrogates, and feasibility is incorporated through acquisition strategies such as Expected Improvement with Constraints or Probability of Feasibility. 
These methods generally require sampling infeasible configurations to learn constraint boundaries.
In database tuning, ResTune \cite{10.1145/3448016.3457291} applies constrained optimization to enforce service-level objectives, such as throughput or latency bounds, which are workload-dependent and evaluated only after executing candidate configurations. 
Feasibility is therefore determined post hoc through runtime measurements.

Our setting differs fundamentally. 
We focus on deterministic knob dependency constraints that are intrinsic to the configuration space (e.g., ordering constraints among knobs). 
Such constraints are static, system-defined, and independent of workload behavior. 
Instead of modeling feasibility probabilistically or verifying it after execution, we encode structural dependencies directly into the configuration domain so that all sampled configurations are valid by construction. 
This eliminates infeasible trials \emph{a priori}, which is particularly important in database systems where dependency violations may cause instability or startup failures \cite{10.14778/3476311.3476411, 10.14778/3450980.3450992}.

From an optimization perspective, our approach restricts the search to a constraint-consistent subspace by encoding structural constraints directly into the feasible region.
This differs from GPTuner \cite{10.1145/3733620.3733641}, which leverages domain knowledge to refine knob ranges and prioritize important parameters. 
While GPTuner biases exploration toward promising regions, it does not enforce structural validity among knobs. 
Our method instead guarantees feasibility at the configuration-space level. 
In contrast to prior database tuning systems that treat the configuration space as box-constrained, our approach explicitly encodes deterministic ordering constraints as structural components of the search domain.

\section{Limitations}
\label{sec:limitations}

\noindent\textbf{Supported constraint types.}
Although our extraction pipeline identifies diverse inter-knob constraints, \Sys{} currently utilizes only ordering and resource-bounded constraints. Supporting richer constraint types and effectively incorporating them into optimization remains future work.

\noindent\textbf{Coverage of supported constraints.}
The prevalence of ordering constraints varies across DBMSs (e.g., only 2.1\% in ClickHouse). We position \Sys{} as complementary to existing tuners: when fewer supported constraints exist, it imposes fewer additional restrictions, while applicable constraints can still eliminate invalid or undesirable configurations. Nevertheless, broader constraint coverage could further improve its applicability and effectiveness.

\noindent\textbf{Integration with search-space reduction methods.}
\Sys{} is generally complementary to existing techniques such as low-dimensional projections~\cite{10.14778/3551793.3551844}, suggested values~\cite{10.1145/3514221.3517843,10.1145/3514221.3520171}, suggested ranges, and special-value handling~\cite{10.14778/3659437.3659449,10.1145/3733620.3733641,10.1145/3626246.3654739}. However, integration with low-dimensional projections requires specialized designs because their many-to-one knob mappings may not preserve constraints defined over the original knobs. Developing constraint-preserving projections remains future work.
\section{Conclusion}

We proposed \Sys{}, a constraint-aware DBMS tuning framework that incorporates deterministic knob dependencies directly into Bayesian optimization. By eliminating infeasible configurations by construction, \Sys{} improves tuning reliability and efficiency. Our results demonstrate the value of explicitly modeling structural dependencies in DBMS configuration spaces.

% This work reframes DBMS tuning as a problem of structural search-space design. Rather than treating the configuration domain as box-constrained and discovering invalid regions through costly trial-and-error, we demonstrate that deterministic knob dependencies define a structurally consistent subspace that can be enforced prior to optimization. Hence, we proposed \Sys{}, a constraint-aware tuning framework that embeds ordering constraints directly into Bayesian optimization. By enforcing structural consistency during both exploration and exploitation, and by introducing topology-aware sampling, \Sys{} eliminates infeasible regions by construction and avoids the inefficiencies of rejection-based strategies.

% More broadly, our results show that explicitly modeling intrinsic configuration dependencies improves both reliability and convergence efficiency. Incorporating structural knowledge into the search domain provides a principled alternative to purely black-box tuning and opens new opportunities for constraint-aware optimization in complex system configuration tasks.
% In future work, we plan to extend this framework to support richer classes of structural constraints and to further investigate their interaction with advanced optimization strategies.
\begin{acks}
This work was supported in part by the U.S. National Science Foundation awards III-2107213 and ITE-2333789.
\end{acks}

%\clearpage

\bibliographystyle{ACM-Reference-Format}
\bibliography{bo, dbtuning, others}

\end{document}